\documentclass{lmcs}

\keywords{Seifert-van Kampen theorem, computational paths, fundamental group, higher-inductive types, pushouts, free products, type theory}

\subjclass[2020]{Primary: 55Q05; Secondary: 03B38, 18N10, 68V15}

\usepackage{mathtools}
\usepackage{amssymb}
\usepackage{tikz-cd}
\usepackage{xcolor}
\usepackage{listings}
\usepackage{booktabs}
\usepackage{amsthm}

\theoremstyle{definition}
\newtheorem{definition}{Definition}[section]
\newtheorem{example}[definition]{Example}
\theoremstyle{plain}
\newtheorem{theorem}[definition]{Theorem}
\newtheorem{lemma}[definition]{Lemma}
\newtheorem{corollary}[definition]{Corollary}

\theoremstyle{remark}
\newtheorem{remark}[definition]{Remark}

\newcommand{\Id}{\mathrm{Id}}
\newcommand{\refl}{\rho}
\newcommand{\sym}{\sigma}

\newcommand{\Rw}{\rightsquigarrow}
\newcommand{\RwEq}{\sim}
\newcommand{\Path}{\mathrm{Path}}
\newcommand{\piOne}{\pi_1}
\newcommand{\piTwo}{\pi_2}
\newcommand{\piN}{\pi_n}
\newcommand{\Pushout}{\mathrm{Pushout}}
\newcommand{\Wedge}{\vee}
\newcommand{\Susp}{\Sigma}
\newcommand{\inl}{\mathrm{inl}}
\newcommand{\inr}{\mathrm{inr}}
\newcommand{\glmark}{\mathrm{glue}}
\newcommand{\decode}{\mathrm{decode}}
\newcommand{\encode}{\mathrm{encode}}
\newcommand{\FreeProduct}{\ast}
\newcommand{\AmalgProduct}{\ast}
\newcommand{\ZZ}{\mathbb{Z}}

\newcommand{\Sone}{S^1}

\newcommand{\Der}{\mathcal{D}}

\newcommand{\CP}{\mathbb{CP}}

\newcommand{\RP}{\mathbb{RP}}

\newcommand{\Loop}{\mathrm{Loop}}


\lstdefinelanguage{Lean}{
  keywords={def, theorem, lemma, axiom, structure, inductive, where, let, in, have, show, by, exact, apply, intro, cases, induction, rfl, simp, calc, fun, noncomputable, abbrev, namespace, end, open, variable, section, Type, Prop, Sort, class, instance, deriving},
  sensitive=true,
  morecomment=[l]{--},
  morecomment=[s]{/-}{-/},
  morestring=[b]",
}
\lstset{
  language=Lean,
  basicstyle=\ttfamily\small,
  keywordstyle=\bfseries,
  commentstyle=\itshape\color{gray},
  breaklines=true,
  frame=single,
  xleftmargin=2em,
  framexleftmargin=1.5em,
  numbers=left,
  numberstyle=\tiny\color{gray},
  numbersep=5pt
}

\begin{document}

\title[The Seifert-van Kampen Theorem via Computational Paths]{The Seifert-van Kampen Theorem via Computational Paths: A Formalized Approach to Computing Fundamental Groups}

\author[A.~F.~Ramos]{Arthur F. Ramos}
\address{Microsoft, USA}
\email{arfreita@microsoft.com}

\author[T.~M.~L.~de~Veras]{Tiago M. L. de Veras}
\address{Departamento de Matem\'atica, Universidade Federal Rural de Pernambuco, Brazil}
\email{tiago.veras@ufrpe.br}

\author[R.~J.~G.~B.~de~Queiroz]{Ruy J. G. B. de Queiroz}
\address{Centro de Inform\'atica, Universidade Federal de Pernambuco, Brazil}
\email{ruy@cin.ufpe.br}

\author[A.~G.~de~Oliveira]{Anjolina G. de Oliveira}
\address{Centro de Inform\'atica, Universidade Federal de Pernambuco, Brazil}
\email{ago@cin.ufpe.br}

\begin{abstract}
The Seifert-van Kampen theorem computes the fundamental group of a space from the fundamental groups of its constituents. We develop a \emph{modular SVK framework} within the setting of \emph{computational paths}---an approach to equality where witnesses are explicit sequences of rewrites governed by the $\mathrm{LND}_{\mathrm{EQ}}$-TRS.

Our contributions are: (i) pushouts as higher-inductive types with \emph{modular typeclass assumptions} for computation rules; (ii) free products and amalgamated free products as quotients of word representations; (iii) an SVK equivalence schema $\piOne(\Pushout(A,B,C)) \simeq \piOne(A) \AmalgProduct_{\piOne(C)} \piOne(B)$ \emph{parametric in user-supplied encode/decode structure}---the decode map and amalgamation-respecting properties are proved, while the encode map and round-trip properties are typeclass assumptions; and (iv) instantiations for classical spaces: figure-eight ($\piOne(\Sone \Wedge \Sone) \simeq \ZZ \FreeProduct \ZZ$), spheres ($\piOne(S^n) \simeq 1$ for $n \geq 2$), Klein bottle, real projective plane, orientable/non-orientable surfaces, and lens spaces.

The development is formalized in Lean 4 with \textbf{41,130 lines} across \textbf{107 modules}, using \textbf{36 kernel axioms} for HIT type/constructor declarations. Pushout computation rules are \emph{not} kernel axioms---they are typeclass assumptions, making dependencies explicit. The library also includes extended features (higher homotopy groups, covering spaces, fibrations, Mayer-Vietoris) documented in Section~\ref{sec:library-highlights}.
\end{abstract}

\maketitle

\section{Introduction}
\label{sec:introduction}

The Seifert-van Kampen theorem, first proved by Herbert Seifert \cite{Seifert1931} and Egbert van Kampen \cite{vanKampen1933}, is one of the most powerful tools in algebraic topology for computing fundamental groups (see \cite{Hatcher2002, May1999} for modern treatments). It states that if a path-connected space $X$ is the union of two path-connected open subspaces $U$ and $V$ with path-connected intersection $U \cap V$, then the fundamental group $\piOne(X)$ can be computed as the amalgamated free product:
\[
\piOne(X) \cong \piOne(U) \AmalgProduct_{\piOne(U \cap V)} \piOne(V)
\]

In homotopy type theory (HoTT) \cite{HoTTBook2013}, this theorem takes a particularly elegant form when expressed in terms of \emph{pushouts}---a higher-inductive type (HIT) that generalizes the notion of gluing spaces together. The HoTT version of the Seifert-van Kampen theorem was developed by Favonia and Shulman \cite{FavoniaShulman2016}, providing a synthetic proof that works entirely within type theory.

The present paper develops a \emph{modular SVK framework} within the setting of \emph{computational paths} \cite{Queiroz2016Paths, Ramos2017IdentityPaths}---an alternative approach to equality in type theory where witnesses of equality are explicit sequences of rewrites. This approach offers several advantages:

\begin{enumerate}
    \item \textbf{Explicit witnesses}: Every equality proof carries the full computational content showing \emph{why} two terms are equal.

    \item \textbf{Syntactic path equality}: The rewrite equality relation ($\RwEq$) is defined as the symmetric-transitive closure of Step rules. Under termination and confluence (established in prior work \cite{Ramos2018Thesis}, not fully mechanized here), $\RwEq$ is decidable via normalization.

    \item \textbf{Concrete coherence}: Higher coherence witnesses (pentagon, triangle, etc.) are explicit rewrite derivations rather than abstract existence proofs.

    \item \textbf{Modular assumptions}: The SVK framework uses \emph{split typeclass assumptions}, allowing users to provide only what they need---from full encode/decode data to minimal Prop-level bijectivity claims.
\end{enumerate}

\paragraph{Relation to Classical SVK.} The classical theorem requires $X = U \cup V$ with $U, V, U \cap V$ open and path-connected. The HIT version generalizes this: the pushout $\Pushout(A, B, C, f, g)$ corresponds to the union where $f$ and $g$ specify how the ``intersection'' $C$ embeds into both spaces. Path-connectedness assumptions are needed to ensure the fundamental group is well-defined; in the HIT setting, this becomes the assumption that all types are path-connected.

\subsection{Main Contributions}

This paper provides:

\begin{enumerate}
    \item Pushouts implemented via Lean's quotient types (\emph{zero kernel axioms}), with \emph{modular typeclass assumptions} for computation rules:
    \begin{itemize}
        \item Point constructors $\inl : A \to \Pushout$ and $\inr : B \to \Pushout$ (definitions, not axioms)
        \item Path constructor $\glmark : \prod_{c:C} \Path(\inl(f(c)), \inr(g(c)))$ (via \texttt{Quot.sound})
        \item Computation/naturality hypotheses as typeclasses (\texttt{HasRecGlueRwEq}, \texttt{HasGlueNaturalLoopRwEq})
        \item Automatic instance derivation for common cases (Axiom K, decidable equality, subsingletons)
    \end{itemize}

    \item The construction of free products $G_1 \FreeProduct G_2$ and amalgamated free products $G_1 \AmalgProduct_H G_2$, including:
    \begin{itemize}
        \item Standard amalgamation quotient
        \item \emph{Full} amalgamated free product with free group reduction (\texttt{FullAmalgamatedFreeProduct})
        \item Both data-level and Prop-level (choice-based) interfaces
    \end{itemize}

    \item An SVK equivalence \emph{schema} parametric in user-supplied encode/decode structure, with \emph{split typeclass assumptions}:
    \begin{itemize}
        \item \texttt{HasPushoutSVKEncodeQuot}: encode map (assumed)
        \item \texttt{HasPushoutSVKDecodeEncode}: decode-encode round-trip (assumed)
        \item \texttt{HasPushoutSVKEncodeDecode}: encode-decode round-trip (assumed)
        \item \texttt{HasPushoutSVKDecodeAmalgBijective}: Prop-level bijectivity alternative
        \item Decode map and amalgamation-respecting property: \textbf{proved}
    \end{itemize}

    \item Applications demonstrating the power of the SVK framework:
    \begin{itemize}
        \item The wedge sum: $\piOne(A \Wedge B) \simeq \piOne(A) \FreeProduct \piOne(B)$
        \item The figure-eight: $\piOne(\Sone \Wedge \Sone) \simeq \ZZ \FreeProduct \ZZ$
        \item Spheres: $\piOne(S^n) \simeq 1$ for $n \geq 2$
        \item Klein bottle: $\piOne(K) \simeq \ZZ \rtimes \ZZ$ (semidirect product)
        \item Real projective plane: $\piOne(\RP^2) \simeq \ZZ_2$
        \item Orientable surfaces: $\piOne(\Sigma_g) \simeq \langle a_1, b_1, \ldots, a_g, b_g \mid \prod_{i=1}^g [a_i, b_i] = 1 \rangle$
        \item Non-orientable surfaces: $\piOne(N_n) \simeq \langle a_1, \ldots, a_n \mid a_1^2 \cdots a_n^2 = 1 \rangle$
        \item Lens spaces: $\piOne(L(p,q)) \simeq \ZZ/p\ZZ$ for coprime $p, q$
        \item Bouquet of $n$ circles: $\piOne(\bigvee_n \Sone) \simeq F_n$ (free group)
    \end{itemize}

    \item A complete Lean 4 formalization comprising:
    \begin{itemize}
        \item 41,130 lines of code across 107 modules
        \item Only 36 kernel axioms (all for HITs)
        \item Torus constructed as $\Sone \times \Sone$ (no axioms)
        \item Opt-in axiom imports for assumption-free APIs
        \item Automated path simplification tactics
    \end{itemize}
\end{enumerate}

\subsection{Related Work}

The Seifert-van Kampen theorem in HoTT was first formalized by Favonia and Shulman \cite{FavoniaShulman2016}, who developed a general framework for computing fundamental groups of HITs. Their work uses the encode-decode method pioneered by Licata and Shulman \cite{LicataShulman2013} for computing $\piOne(\Sone) \cong \ZZ$.

The computational paths approach to equality originates in work by de Queiroz and colleagues \cite{Queiroz2016Paths, Ramos2017IdentityPaths, Ramos2018ExplicitPaths}, building on earlier ideas about equality proofs as sequences of rewrites \cite{Ruy4}. Our previous work established that computational paths form a weak groupoid \cite{Veras2023WeakGroupoid} and can be used to calculate fundamental groups \cite{Veras2018FundamentalGroups, Veras2023Topological}.

The connection between rewriting and coherence has been developed by Kraus and von Raumer \cite{KrausVonRaumer2022}, who showed that confluence and termination suffice to build coherent higher-dimensional structure. Their key insight is that in a confluent, terminating rewrite system, any two parallel reductions to normal form can be connected by a canonical ``diamond'' of rewrite paths---and this extends to higher dimensions. For our $\mathrm{LND}_{\mathrm{EQ}}$-TRS, this means that higher coherences (such as the pentagon identity for associativity and the naturality conditions for transport) are automatically satisfied once we establish that paths with the same normal form are $\RwEq$-equivalent.

\subsection{Reading Guide}

Readers interested in the mathematical content can focus on Sections~\ref{sec:pushouts}--\ref{sec:applications}, treating Lean code as pseudocode. Those interested in formalization should additionally consult Section~\ref{sec:lean} and the supplementary code. Key definitions are boxed for quick reference. We use the figure-eight space $\Sone \Wedge \Sone$ as a running example throughout.

\subsection{Structure of the Paper}

Section~\ref{sec:background} reviews computational paths, rewrite equality, and fundamental groups. Section~\ref{sec:pushouts} defines pushouts as higher-inductive types with modular assumptions. Section~\ref{sec:freeproducts} constructs free products and amalgamated free products. Section~\ref{sec:svk} proves the Seifert-van Kampen theorem using the encode-decode method with split assumptions. Section~\ref{sec:applications} applies the theorem extensively. Section~\ref{sec:extended} covers extended features including higher homotopy groups and covering spaces. Section~\ref{sec:lean} discusses the Lean 4 formalization. Section~\ref{sec:conclusion} concludes.

\section{Background: Computational Paths}
\label{sec:background}

\subsection{Computational Paths}

A \emph{computational path} $p : \Path(a, b)$ from $a$ to $b$ (both terms of type $A$) is an explicit sequence of rewrite steps witnessing the equality $a = b$. The fundamental operations are:

\begin{itemize}
    \item \textbf{Reflexivity}: $\refl(a) : \Path(a, a)$ --- the empty sequence of rewrites.
    \item \textbf{Symmetry}: If $p : \Path(a, b)$, then $\sym(p) : \Path(b, a)$ --- reverse and invert each step.
    \item \textbf{Transitivity}: If $p : \Path(a, b)$ and $q : \Path(b, c)$, then $p \cdot q : \Path(a, c)$ --- concatenate the sequences.
    \item \textbf{Congruence}: If $p : \Path(a, b)$ and $f : A \to B$, then $f_*(p) : \Path(f(a), f(b))$.
    \item \textbf{Transport}: If $p : \Path(a, b)$ and $P : A \to \mathrm{Type}$, then $\mathrm{transport}(P, p) : P(a) \to P(b)$.
\end{itemize}

\paragraph{Notation.} Throughout this paper, we use:
\begin{itemize}
    \item $p \cdot q$ for path composition (transitivity)
    \item $p^{-1}$ for path inverse (symmetry)
    \item $f_*(p)$ for $\mathrm{congrArg}(f, p)$
    \item $\refl(a)$ or simply $\refl$ for reflexivity
\end{itemize}

In Lean, a path is represented as a structure storing both an explicit rewrite trace and an equality proof:
\begin{lstlisting}
structure Path {A : Type u} (a b : A) where
  steps : List (Step A)  -- explicit list of rewrite steps
  proof : a = b          -- the derived equality proof
\end{lstlisting}

\begin{remark}[Explicit Step Lists and Consistency]
\label{rem:consistency}
The \texttt{steps} field is \emph{the} distinguishing data of a path, not the underlying equality proof. Two paths with the same endpoints may have different step lists, and they are related by $\RwEq$ only if there is an explicit derivation transforming one step sequence into the other via the Step rules.

This design is critical for consistency: in Lean's proof-irrelevant \texttt{Prop}, all equality proofs $p, q : a = b$ satisfy $p = q$ (proof irrelevance). If paths were identified solely by their equality proofs, all loops at a point would be equal, collapsing $\piOne$ to the trivial group.

Instead, path identity depends on the explicit \texttt{steps} list. The fundamental group $\piOne(A, a) := \Loop(A, a) / {\RwEq}$ quotients loops by rewrite equivalence of their step sequences---not by equality of their underlying proofs. The $\RwEq$ relation captures the computational content: two loops are identified only when one step sequence can be transformed to the other via the groupoid laws.
\end{remark}

\subsection{The Step Relation}

The foundation of the computational paths framework is the \texttt{Step} relation, which defines \emph{single-step rewrites} between paths. The $\mathrm{LND}_{\mathrm{EQ}}$-TRS consists of over 50 rewrite rules organized into categories:

\begin{itemize}
    \item \textbf{Groupoid laws}: $\refl^{-1} \Rw \refl$, $(p^{-1})^{-1} \Rw p$, $\refl \cdot p \Rw p$, $p \cdot \refl \Rw p$, $p \cdot p^{-1} \Rw \refl$, $p^{-1} \cdot p \Rw \refl$, $(p \cdot q)^{-1} \Rw q^{-1} \cdot p^{-1}$, and $(p \cdot q) \cdot r \Rw p \cdot (q \cdot r)$.
    \item \textbf{Type-specific rules}: $\beta$-rules for products, sums, and functions; $\eta$-rules; transport laws.
    \item \textbf{Context rules}: Allow rewrites inside larger expressions: if $p \Rw q$ then $C[p] \Rw C[q]$.
    \item \textbf{Congruence closure}: If $p \Rw q$ then $p^{-1} \Rw q^{-1}$, $p \cdot r \Rw q \cdot r$, and $r \cdot p \Rw r \cdot q$.
\end{itemize}

In Lean, the Step relation is defined inductively with 76 constructors:
\begin{lstlisting}
inductive Step : {A : Type u} -> {a b : A} -> Path a b -> Path a b -> Prop
  | symm_refl (a : A) : Step (symm (Path.refl a)) (Path.refl a)
  | symm_symm (p : Path a b) : Step (symm (symm p)) p
  | trans_refl_left (p : Path a b) : Step (trans (Path.refl a) p) p
  | trans_refl_right (p : Path a b) : Step (trans p (Path.refl b)) p
  | trans_symm (p : Path a b) : Step (trans p (symm p)) (Path.refl a)
  | symm_trans (p : Path a b) : Step (trans (symm p) p) (Path.refl b)
  | trans_assoc (p : Path a b) (q : Path b c) (r : Path c d) :
      Step (trans (trans p q) r) (trans p (trans q r))
  | symm_congr : Step p q -> Step (symm p) (symm q)
  | trans_congr_left (r : Path b c) : Step p q -> Step (trans p r) (trans q r)
  | trans_congr_right (p : Path a b) : Step q r -> Step (trans p q) (trans p r)
  -- ... plus type-specific rules (products, sums, transport, contexts)
\end{lstlisting}

\subsection{Rewrite Equality}

Two paths $p, q : \Path(a, b)$ are \emph{rewrite equal} ($p \RwEq q$) if they can be transformed into each other via the $\mathrm{LND}_{\mathrm{EQ}}$-TRS rewrite system. The key rewrite rules include:

\begin{align}
\refl \cdot p &\Rw p \tag{left unit} \\
p \cdot \refl &\Rw p \tag{right unit} \\
p \cdot p^{-1} &\Rw \refl \tag{right inverse} \\
p^{-1} \cdot p &\Rw \refl \tag{left inverse} \\
(p \cdot q) \cdot r &\Rw p \cdot (q \cdot r) \tag{associativity}
\end{align}

\begin{example}[Rewrite Derivation]
\label{ex:rewrite}
Consider $p \cdot (q \cdot q^{-1})$ where $p, q : \Path(a, a)$. The derivation to $p$ proceeds:
\begin{align*}
p \cdot (q \cdot q^{-1})
&\Rw p \cdot \refl \tag{right inverse} \\
&\Rw p \tag{right unit}
\end{align*}
This explicit derivation is the ``computational content'' witnessing that $p \cdot (q \cdot q^{-1}) \RwEq p$.
\end{example}

In Lean, rewrite equality is defined as the equivalence closure of Step:
\begin{lstlisting}
inductive RwEq {A : Type u} {a b : A} : Path a b -> Path a b -> Prop
  | refl (p : Path a b) : RwEq p p
  | step {p q : Path a b} : Step p q -> RwEq p q
  | symm {p q : Path a b} : RwEq p q -> RwEq q p
  | trans {p q r : Path a b} : RwEq p q -> RwEq q r -> RwEq p r
\end{lstlisting}

\subsection{Metatheory of the $\mathrm{LND}_{\mathrm{EQ}}$-TRS}
\label{sec:metatheory}

The rewrite system $\mathrm{LND}_{\mathrm{EQ}}$-TRS has been studied in prior work \cite{Ramos2018ExplicitPaths, Ramos2018Thesis, Queiroz2016Paths}. The following table summarizes the status of key metatheoretic properties in this formalization:

\begin{center}
\begin{tabular}{llp{6cm}}
\toprule
\textbf{Property} & \textbf{Status} & \textbf{Location / Notes} \\
\midrule
Soundness & Proved & \texttt{step\_toEq} in \texttt{Step.lean}: $p \RwEq q \Rightarrow p.\mathrm{proof} = q.\mathrm{proof}$ \\
Confluence & Typeclass & \texttt{HasJoinOfRw} in \texttt{Confluence.lean}: users provide join witnesses \\
Local confluence & Proved & Critical pairs analyzed in \texttt{Confluence.lean} (specific cases) \\
Termination & Not formalized & Prior paper proofs \cite{Ramos2018Thesis}; not mechanized here \\
Decidability of $\RwEq$ & Assumed & Would follow from termination + confluence; not fully mechanized \\
\bottomrule
\end{tabular}
\end{center}

\begin{remark}[Non-symmetric Soundness]
Soundness says $p \RwEq q \Rightarrow p.\mathrm{proof} = q.\mathrm{proof}$. \emph{Critically}, the converse does not hold: paths with the same equality proof are not necessarily $\RwEq$-related. This asymmetry is essential for non-trivial fundamental groups (see Remark~\ref{rem:consistency}).
\end{remark}

\subsection{Loop Spaces and Fundamental Groups}

\begin{definition}[Path-Connected]
\label{def:path-connected}
A type $A$ is \emph{path-connected} if for all $a, b : A$, there exists $p : \Path(a, b)$. Equivalently, $\prod_{a, b : A} \|\Path(a, b)\|_{-1}$ where $\|-\|_{-1}$ denotes propositional truncation.
\end{definition}

\begin{definition}[Loop Space]
The \emph{loop space} of a type $A$ at a basepoint $a : A$ is:
\[
\Omega(A, a) := \Path(a, a)
\]
\end{definition}

\begin{definition}[Fundamental Group]
The \emph{fundamental group} $\piOne(A, a)$ is the quotient of the loop space by rewrite equality:
\[
\piOne(A, a) := \Omega(A, a) / {\RwEq}
\]
\end{definition}

In Lean, the fundamental group is realized as a quotient:
\begin{lstlisting}
abbrev LoopSpace (A : Type u) (a : A) : Type u := Path a a

def PiOne (A : Type u) (a : A) : Type u := Quot (@RwEq A a a)

notation "pi_1(" A ", " a ")" => PiOne A a
\end{lstlisting}

The group operations on $\piOne(A, a)$ are induced from path operations:
\begin{itemize}
    \item Identity: $[\refl]$
    \item Multiplication: $[\alpha] \cdot [\beta] := [\alpha \cdot \beta]$
    \item Inverse: $[\alpha]^{-1} := [\alpha^{-1}]$
\end{itemize}

The group laws (associativity, unit, inverse) hold because the corresponding path identities are witnessed by $\RwEq$.

\begin{remark}[Base Point Dependence]
\label{rem:basepoint}
The fundamental group $\piOne(A, a)$ depends on the choice of basepoint $a$. For different basepoints $a, b : A$ in a path-connected type, the fundamental groups are isomorphic via conjugation by a path $\gamma : \Path(a, b)$:
\[
\piOne(A, a) \cong \piOne(A, b), \quad [\alpha] \mapsto [\gamma^{-1} \cdot \alpha \cdot \gamma]
\]
However, this isomorphism is not canonical---it depends on the choice of $\gamma$.
\end{remark}

\subsection{The Circle and Its Fundamental Group}

\begin{definition}[Circle HIT]
\label{def:circle}
The circle $\Sone$ is the higher-inductive type with:
\begin{itemize}
    \item A point constructor: $\mathrm{base} : \Sone$
    \item A path constructor: $\mathrm{loop} : \Path(\mathrm{base}, \mathrm{base})$
\end{itemize}
\end{definition}

\begin{theorem}[Fundamental Group of Circle]
\label{thm:circle-pi1}
$\piOne(\Sone, \mathrm{base}) \simeq \ZZ$
\end{theorem}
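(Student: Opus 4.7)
The plan is to adapt the Licata-Shulman encode-decode argument \cite{LicataShulman2013} to the computational paths setting, producing an explicit bijection $\piOne(\Sone, \mathrm{base}) \simeq \ZZ$ that respects the group structure. The proof revolves around a fibration $\mathrm{code} : \Sone \to \mathrm{Type}$ (the universal cover of $\Sone$) together with a pair of maps $\encode$ and $\decode$ whose round-trip identities hold modulo $\RwEq$.

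I would begin by defining $\mathrm{code}$ via circle recursion, sending $\mathrm{base}$ to $\ZZ$ and the loop constructor to the univalent image of the successor equivalence $\mathrm{succ} : \ZZ \simeq \ZZ$. Simultaneously I would define iterated loops $\mathrm{loop}^n$ for $n : \ZZ$ by integer recursion: $\mathrm{loop}^0 := \refl(\mathrm{base})$, $\mathrm{loop}^{n+1} := \mathrm{loop}^n \cdot \mathrm{loop}$, and $\mathrm{loop}^{-(n+1)} := \mathrm{loop}^{-n} \cdot \mathrm{loop}^{-1}$. A routine induction on $\ZZ$ establishes the homomorphism property $\mathrm{loop}^{m+n} \RwEq \mathrm{loop}^m \cdot \mathrm{loop}^n$ using only the groupoid laws listed in Section~\ref{sec:background} (unit, associativity, left/right inverse), each step being a concrete constructor of the \texttt{Step} relation.

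Next I would set $\encode_x(p) := \mathrm{transport}(\mathrm{code}, p, 0)$ and define $\decode_x : \mathrm{code}(x) \to \Path(\mathrm{base}, x)$ by circle induction, sending $n : \ZZ = \mathrm{code}(\mathrm{base})$ to $\mathrm{loop}^n$ at the basepoint. The loop case of the induction demands a transport coherence which, after unfolding via the transport-in-arrow-types lemma, reduces to an instance of the homomorphism identity already established. For the round trips, $\encode_{\mathrm{base}}(\mathrm{loop}^n) = n$ follows by integer induction using the definition of $\mathrm{code}$ on the loop, while $\decode_x(\encode_x(p)) \RwEq p$ is proved by path induction, reducing to the definitional identity $\decode_{\mathrm{base}}(0) = \mathrm{loop}^0 = \refl(\mathrm{base})$. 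Finally, the quotient map $[\mathrm{loop}^n] \mapsto n$ on $\LoopSp(\Sone, \mathrm{base})/{\RwEq}$ is checked to be a group isomorphism via the homomorphism property of $\mathrm{loop}^{(-)}$.

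The main obstacle will be the $\decode \circ \encode$ round trip at the level of $\RwEq$ rather than plain propositional equality. In the standard HoTT argument one uses path induction freely, but here the target equation lives in the quotient by $\RwEq$, so the transport-in-arrow-types coherence underlying the loop case must itself be produced as a concrete rewrite derivation rather than invoked as a black-box lemma. As emphasized in Remark~\ref{rem:consistency}, soundness is one-directional---equal underlying proofs do not entail $\RwEq$-equivalent step lists---so this bookkeeping cannot be shortcut by appealing to proof-irrelevance, and it is where essentially all of the technical work of the proof is concentrated.
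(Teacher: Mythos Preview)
Your proposal outlines the standard Licata--Shulman argument, but in this paper's framework that argument cannot be carried out, and the paper does not attempt it: Theorem~\ref{thm:circle-pi1} is stated without proof and is supplied via an opt-in axiom/typeclass instance (see the Application Inventory entry for \texttt{circlePiOneEquivInt} and Remark~\ref{rem:encode-decode-consistency}). So there is no ``paper's own proof'' to match---the result is assumed, not derived.

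More importantly, your plan has a genuine gap at the $\encode$ step. You set $\encode_x(p) := \mathrm{transport}(\mathrm{code}, p, 0)$ and then claim $\encode_{\mathrm{base}}(\mathrm{loop}^n) = n$ by induction using ``the definition of $\mathrm{code}$ on the loop.'' But in this development a \texttt{Path} carries an underlying equality proof in Lean's proof-irrelevant \texttt{Prop}, and transport is driven by that proof. Any loop at $\mathrm{base}$ has the same underlying proof as $\refl(\mathrm{base})$, so $\mathrm{transport}(\mathrm{code}, \mathrm{loop}, -)$ is \emph{definitionally} the identity on $\ZZ$, not the successor. The paper says this explicitly: ``We do not assert that transport along a self-equality computes nontrivially---such an assertion would be inconsistent with Lean's proof-irrelevant \texttt{Prop}'' (Remark~\ref{rem:encode-decode-consistency}). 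Your $\encode$ therefore collapses to the constant-$0$ map, and the round-trip $\encode_{\mathrm{base}}(\mathrm{loop}^n) = n$ fails for $n \neq 0$. Relatedly, your construction of $\mathrm{code}$ invokes ``the univalent image of the successor equivalence,'' but univalence is nowhere assumed in this paper.

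You correctly flagged the $\decode \circ \encode$ direction as delicate, but the real obstruction is one level earlier: there is no way, within this system, to build an $\encode$ that distinguishes $\mathrm{loop}$ from $\refl$. That is precisely why the paper packages the encode map and both round-trips as typeclass assumptions (\texttt{HasPushoutSVKEncodeQuot}, etc.) rather than proving them, and why $\piOne(\Sone) \simeq \ZZ$ is listed as an opt-in axiom rather than a theorem with an internal proof.
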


This classical result, first formalized via encode-decode by Licata and Shulman \cite{LicataShulman2013}, serves as the foundation for our SVK applications.

\section{Pushouts as Higher-Inductive Types}
\label{sec:pushouts}

\subsection{The Pushout Type}

Given types $A$, $B$, $C$ with maps $f : C \to A$ and $g : C \to B$, the \emph{pushout} $\Pushout(A, B, C, f, g)$ is the higher-inductive type that ``glues'' $A$ and $B$ together along the common image of $C$:

\begin{center}
\begin{tikzcd}
C \arrow[r, "g"] \arrow[d, "f"'] & B \arrow[d, "\inr"] \\
A \arrow[r, "\inl"'] & \Pushout(A, B, C, f, g)
\end{tikzcd}
\end{center}

\begin{definition}[Pushout]
\label{def:pushout}
The pushout $\Pushout(A, B, C, f, g)$ is characterized by:
\begin{enumerate}
    \item \textbf{Point constructors}:
    \begin{align*}
    \inl &: A \to \Pushout(A, B, C, f, g) \\
    \inr &: B \to \Pushout(A, B, C, f, g)
    \end{align*}

    \item \textbf{Path constructor}: For each $c : C$,
    \[
    \glmark(c) : \Path(\inl(f(c)), \inr(g(c)))
    \]
\end{enumerate}
\end{definition}

In Lean, pushouts are implemented using quotient types (\textbf{no kernel axioms}):
\begin{lstlisting}
-- Relation generating the pushout quotient
inductive PushoutRel (A B C : Type u) (f : C -> A) (g : C -> B)
    : Sum A B -> Sum A B -> Prop
  | glue (c : C) : PushoutRel A B C f g (Sum.inl (f c)) (Sum.inr (g c))

-- Pushout as quotient of Sum by PushoutRel
def Pushout (A B C : Type u) (f : C -> A) (g : C -> B) : Type u :=
  Quot (PushoutRel A B C f g)

-- Constructors are definitions, not axioms
def inl (a : A) : Pushout A B C f g := Quot.mk _ (Sum.inl a)
def inr (b : B) : Pushout A B C f g := Quot.mk _ (Sum.inr b)
def glue (c : C) : Path (inl (f c)) (inr (g c)) :=
  Path.ofEq (Quot.sound (PushoutRel.glue c))
\end{lstlisting}

\begin{remark}[Quot-Based Implementation]
Unlike Circle, Klein bottle, and other HITs which require kernel axioms for their path constructors, the Pushout uses Lean's built-in \texttt{Quot} type. This adds \textbf{zero kernel axioms} to the trusted base. The glue path is constructed via \texttt{Quot.sound}, which witnesses that related elements become equal in the quotient.
\end{remark}

\subsection{Recursion Principle}

\begin{definition}[Pushout Recursion]
\label{def:pushout-rec}
Given a type $D$ with:
\begin{itemize}
    \item $\mathrm{onInl} : A \to D$
    \item $\mathrm{onInr} : B \to D$
    \item $\mathrm{onGlue} : \prod_{c:C} \Path(\mathrm{onInl}(f(c)), \mathrm{onInr}(g(c)))$
\end{itemize}
There exists a function $\mathrm{rec} : \Pushout(A, B, C, f, g) \to D$ such that:
\begin{align*}
\mathrm{rec}(\inl(a)) &= \mathrm{onInl}(a) \\
\mathrm{rec}(\inr(b)) &= \mathrm{onInr}(b) \\
(\mathrm{rec})_*(\glmark(c)) &\RwEq \mathrm{onGlue}(c)
\end{align*}
\end{definition}

\subsection{Modular Typeclass Assumptions}

A key design choice in our formalization is to make the computation and naturality rules \emph{typeclass assumptions} rather than kernel axioms. This allows users to provide only what they need:

\begin{lstlisting}
class HasRecGlueRwEq (D : Type u) (onInl : A -> D) (onInr : B -> D)
    (onGlue : forall c, Path (onInl (f c)) (onInr (g c))) where
  rec_glue_rweq : forall c, RwEq (congrArg rec (glue c)) (onGlue c)

class HasGlueNaturalLoopRwEq (c0 : C) where
  glue_natural_loop : forall (p : LoopSpace C c0),
    RwEq (inlPath (congrArg f p))
         (trans (glue c0) (trans (inrPath (congrArg g p)) (symm (glue c0))))
\end{lstlisting}

\subsection{Automatic Instance Derivation}

The framework provides automatic instances for common cases:

\begin{lstlisting}
-- When both A and B satisfy Axiom K
instance [DecidableEq A] [HasDecidableEqAxiomK A]
         [DecidableEq B] [HasDecidableEqAxiomK B] :
    HasGlueNaturalLoopRwEq c0 where
  glue_natural_loop := fun p => by path_simp

-- When C is a subsingleton (e.g., Unit for wedge sums)
instance [Subsingleton C] : HasGlueNaturalLoopRwEq c0 where
  glue_natural_loop := fun p => by simp [Subsingleton.elim]; path_rfl
\end{lstlisting}

\subsection{Glue Naturality}

A key property for the SVK theorem is the \emph{naturality of glue paths}:

\begin{lemma}[Glue Naturality]
\label{lem:glue-natural}
For any path $p : \Path(c_1, c_2)$ in $C$, the following square commutes up to $\RwEq$:
\begin{center}
\begin{tikzcd}
\inl(f(c_1)) \arrow[r, "\inl_*(f_*(p))"] \arrow[d, "\glmark(c_1)"'] & \inl(f(c_2)) \arrow[d, "\glmark(c_2)"] \\
\inr(g(c_1)) \arrow[r, "\inr_*(g_*(p))"'] & \inr(g(c_2))
\end{tikzcd}
\end{center}
That is, we have the rewrite equality:
\[
\inl_*(f_*(p)) \cdot \glmark(c_2) \RwEq \glmark(c_1) \cdot \inr_*(g_*(p))
\]
\end{lemma}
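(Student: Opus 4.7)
The plan is to prove the naturality square by induction on the computational path $p : \Path(c_1, c_2)$, exploiting the fact that paths are generated by reflexivity, transitivity, and symmetry together with the congruence rules of the $\mathrm{LND}_{\mathrm{EQ}}$-TRS. In each generating case I will rewrite both sides of the target $\RwEq$ into a common form using groupoid laws and the functoriality of $h_*$.

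For the \emph{reflexivity} case $p = \refl(c_1)$ (so $c_2 = c_1$), the congruence-on-reflexivity rule $h_*(\refl) \Rw \refl$ reduces $\inl_*(f_*(p))$ and $\inr_*(g_*(p))$ to $\refl$. The left-hand side becomes $\refl \cdot \glmark(c_1) \Rw \glmark(c_1)$ by left unit, and the right-hand side becomes $\glmark(c_1) \cdot \refl \Rw \glmark(c_1)$ by right unit, so the two sides are $\RwEq$.

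For the \emph{transitivity} case $p = p_1 \cdot p_2$ with $p_1 : \Path(c_1, c')$ and $p_2 : \Path(c', c_2)$, I would apply the inductive hypothesis to $p_1$ and $p_2$ separately. Using the functoriality rule $h_*(p_1 \cdot p_2) \Rw h_*(p_1) \cdot h_*(p_2)$ for $h \in \{f, g, \inl, \inr\}$ together with associativity, the left-hand side rewrites to $\inl_*(f_*(p_1)) \cdot (\inl_*(f_*(p_2)) \cdot \glmark(c_2))$. The IH for $p_2$, applied in context (via the congruence closure of $\RwEq$ under composition on the left), turns the inner composite into $\glmark(c') \cdot \inr_*(g_*(p_2))$. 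Reassociating and applying the IH for $p_1$ then pushes $\glmark$ through $p_1$, producing $\glmark(c_1) \cdot (\inr_*(g_*(p_1)) \cdot \inr_*(g_*(p_2)))$, which reassembles via functoriality to the right-hand side $\glmark(c_1) \cdot \inr_*(g_*(p_1 \cdot p_2))$.

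The \emph{symmetry} case $p = q^{-1}$ is where I expect the main obstacle. Starting from the IH naturality square for $q : \Path(c_2, c_1)$, I would invert it using $(u \cdot v)^{-1} \Rw v^{-1} \cdot u^{-1}$, the involution rule $(u^{-1})^{-1} \Rw u$, and the congruence-with-symmetry rule $h_*(q^{-1}) \Rw (h_*(q))^{-1}$. Transforming $\inl_*(f_*(q)) \cdot \glmark(c_1) \RwEq \glmark(c_2) \cdot \inr_*(g_*(q))$ into $\inl_*(f_*(q^{-1})) \cdot \glmark(c_2) \RwEq \glmark(c_1) \cdot \inr_*(g_*(q^{-1}))$ requires several carefully sequenced applications of inverse distributivity and cancellation---essentially an ``inversion of a commuting square'' lemma inside the rewrite calculus. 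In the Lean development this bookkeeping is the most delicate step, and I would discharge it either with the \texttt{path\_simp} tactic or by factoring out a dedicated auxiliary lemma that inverts arbitrary $\RwEq$-commuting squares once and for all. Once this case is settled, the induction is complete.
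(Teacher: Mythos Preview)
Your argument rests on an induction principle that the computational-paths framework does not provide. You assume that every $p : \Path(c_1,c_2)$ is built from $\refl$, $\sym$, and transitivity, so that case-splitting on those three constructors exhausts all possibilities. But in this paper a path is the \emph{structure}
\[
\texttt{Path}\;a\;b \;=\; \{\texttt{steps} : \texttt{List (Step } A\texttt{)},\ \texttt{proof} : a = b\},
\]
and $\refl$, $\sym$, $\cdot$ are \emph{operations} on this structure, not generating constructors. An arbitrary path in $C$ may arise as $\texttt{Path.ofEq}\,h$ for some bare equality proof $h$, or contain primitive step constructors specific to $C$; there is no eliminator that reduces such a path to your three cases. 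So the ``induction on $p$'' never gets off the ground, and even if it did, your base case would have to cover atomic one-step paths, not just $\refl$.

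This is not a repairable detail: the paper does not prove Lemma~\ref{lem:glue-natural} at all. It is supplied as the typeclass assumption \texttt{HasGlueNaturalLoopRwEq}, and the Remark immediately after the corollary states explicitly that ``in a cubical type theory, both would be provable; here, we axiomatize them.'' The only situations in which the paper derives it are the special instances where $C$ is a subsingleton (so every $p$ is forced to $\refl$ and your reflexivity case really is the whole story) or where $A$ and $B$ satisfy Axiom~K (so all parallel paths collapse). In the general setting your proof strategy cannot work, because the ambient type theory lacks the computation rule that would let transport along $\glmark$ interact nontrivially with $\inl_*,\inr_*$; that is precisely why the paper packages glue naturality as a user-supplied hypothesis rather than a theorem.
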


Rearranging, we obtain a form useful for the decode function:

\begin{corollary}
\label{cor:glue-natural-rearranged}
For any path $p : \Path(c_1, c_2)$ in $C$:
\[
\inl_*(f_*(p)) \RwEq \glmark(c_1) \cdot \inr_*(g_*(p)) \cdot \glmark(c_2)^{-1}
\]
\end{corollary}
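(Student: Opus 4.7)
The plan is to derive the corollary from Lemma~\ref{lem:glue-natural} by a purely formal manipulation inside the $\mathrm{LND}_{\mathrm{EQ}}$-TRS, using only the groupoid rewrite rules listed in Section~\ref{sec:background} (associativity, right inverse, right unit) together with the congruence-closure rule that lets us right-multiply a rewrite equality by a common path.

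Concretely, I would start from the conclusion of Lemma~\ref{lem:glue-natural},
\[
\inl_*(f_*(p)) \cdot \glmark(c_2) \;\RwEq\; \glmark(c_1) \cdot \inr_*(g_*(p)),
\]
and apply \texttt{trans\_congr\_left} with $r := \glmark(c_2)^{-1}$ on both sides. This yields
\[
\bigl(\inl_*(f_*(p)) \cdot \glmark(c_2)\bigr) \cdot \glmark(c_2)^{-1}
\;\RwEq\;
\bigl(\glmark(c_1) \cdot \inr_*(g_*(p))\bigr) \cdot \glmark(c_2)^{-1}.
\]
The right-hand side is already the expression we want. On the left-hand side, I reassociate via \texttt{trans\_assoc} to get $\inl_*(f_*(p)) \cdot (\glmark(c_2) \cdot \glmark(c_2)^{-1})$, then collapse the inner pair by the right-inverse rule \texttt{trans\_symm} to $\inl_*(f_*(p)) \cdot \refl$, and finally apply right-unit \texttt{trans\_refl\_right} to obtain $\inl_*(f_*(p))$. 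Stringing these Step rewrites together with \texttt{RwEq.trans} and \texttt{RwEq.symm} gives the claimed equivalence.

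There is essentially no mathematical obstacle here: every rule invoked is among the explicit constructors of the \texttt{Step} inductive family, and no type-specific or HIT-specific reasoning is required. The only subtlety worth flagging is the direction of the \texttt{RwEq.symm} applications---the rewrite rules are oriented (e.g., $(p \cdot q) \cdot r \Rw p \cdot (q \cdot r)$ rather than the reverse), so the collapse step $\inl_*(f_*(p)) \cdot \refl \RwEq \inl_*(f_*(p))$ is a forward Step, but we need to insert its inverse at the beginning of the chain to get the desired left-hand side of the corollary. In the Lean formalization this is handled transparently by the \texttt{path\_simp} tactic, which normalizes both sides with respect to the groupoid fragment of the $\mathrm{LND}_{\mathrm{EQ}}$-TRS; writing the proof manually is a matter of assembling three or four explicit \texttt{Step} constructors.
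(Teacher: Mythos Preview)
Your proposal is correct and matches the paper's approach: the paper states the corollary immediately after Lemma~\ref{lem:glue-natural} with only the word ``Rearranging'' as justification, and what you have written is exactly that rearrangement spelled out at the level of the \texttt{Step} constructors. The one minor imprecision is that \texttt{trans\_congr\_left} is a \texttt{Step} constructor while the lemma gives you an \texttt{RwEq}, so strictly speaking you need the congruence rule lifted to \texttt{RwEq} (by induction on the \texttt{RwEq} derivation); this is routine and implicit in the paper as well.
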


\begin{remark}[Two Forms of Glue Naturality]
The axiom \texttt{glue\_natural} provides a \emph{propositional equality} of paths, while \texttt{glue\_natural\_rweq} provides the $\RwEq$ equivalence needed for quotient arguments. In a cubical type theory, both would be provable; here, we axiomatize them.
\end{remark}

\subsection{Special Cases}

\subsubsection{Wedge Sum}

The \emph{wedge sum} $A \Wedge B$ is the pushout of $A \leftarrow \mathrm{pt} \to B$:
\[
A \Wedge B := \Pushout(A, B, \mathbf{1}, \lambda\_.a_0, \lambda\_.b_0)
\]
This identifies the basepoints $a_0 : A$ and $b_0 : B$.

\begin{example}[Figure-Eight as Wedge]
\label{ex:figure-eight-wedge}
The figure-eight space is $\Sone \Wedge \Sone$---two circles joined at a single point. The glue path identifies the basepoints of the two circles:
\[
\glmark(*) : \Path(\inl(\mathrm{base}_1), \inr(\mathrm{base}_2))
\]
\end{example}

In Lean:
\begin{lstlisting}
def Wedge (A : Type u) (B : Type u) (a0 : A) (b0 : B) : Type u :=
  Pushout A B PUnit' (fun _ => a0) (fun _ => b0)
\end{lstlisting}

\subsubsection{Suspension}

The \emph{suspension} $\Susp A$ is the pushout of $\mathbf{1} \leftarrow A \to \mathbf{1}$:
\[
\Susp A := \Pushout(\mathbf{1}, \mathbf{1}, A, \lambda\_.*, \lambda\_.*)
\]
This adds a ``north pole'', ``south pole'', and meridian paths connecting them.

In Lean:
\begin{lstlisting}
def Suspension (A : Type u) : Type u :=
  Pushout PUnit' PUnit' A (fun _ => PUnit'.unit) (fun _ => PUnit'.unit)

noncomputable def north : Suspension A := Pushout.inl PUnit'.unit
noncomputable def south : Suspension A := Pushout.inr PUnit'.unit
noncomputable def merid (a : A) : Path north south := Pushout.glue a
\end{lstlisting}

\subsubsection{Bouquet of $n$ Circles}

The \emph{bouquet} of $n$ circles $\bigvee_n \Sone$ generalizes the wedge sum:

\begin{lstlisting}
def BouquetN (n : Nat) : Type u :=
  -- n circles wedged together at a single point
  Pushout (Fin' n -> Circle) PUnit' (Fin' n)
    (fun i => Circle.circleBase)
    (fun _ => PUnit'.unit)
\end{lstlisting}

\section{Free Products and Amalgamated Free Products}
\label{sec:freeproducts}

\subsection{Free Product Words}

\begin{definition}[Free Product Word]
A \emph{word} in the free product $G_1 \FreeProduct G_2$ is an alternating sequence of elements:
\begin{lstlisting}
inductive FreeProductWord (G1 : Type u) (G2 : Type u) : Type u
  | nil : FreeProductWord G1 G2
  | consLeft (x : G1) (rest : FreeProductWord G1 G2)
  | consRight (y : G2) (rest : FreeProductWord G1 G2)
\end{lstlisting}
\end{definition}

The group operations on words are defined by concatenation and inversion with appropriate reductions.

\begin{definition}[Word Operations]
\begin{lstlisting}
-- Concatenation
def concat : FreeProductWord G1 G2 -> FreeProductWord G1 G2 ->
    FreeProductWord G1 G2
  | nil, w => w
  | consLeft x rest, w => consLeft x (concat rest w)
  | consRight y rest, w => consRight y (concat rest w)

-- Inversion
def invert : FreeProductWord G1 G2 -> FreeProductWord G1 G2
  | nil => nil
  | consLeft x rest => concat (invert rest) (singleLeft (-x))
  | consRight y rest => concat (invert rest) (singleRight (-y))
\end{lstlisting}
\end{definition}

\subsection{Amalgamated Free Product}

When $G_1$ and $G_2$ share a common subgroup $H$ via homomorphisms $i_1 : H \to G_1$ and $i_2 : H \to G_2$, the \emph{amalgamated free product} $G_1 \AmalgProduct_H G_2$ identifies $i_1(h)$ with $i_2(h)$ for all $h : H$.

\begin{definition}[Amalgamation Relation]
\begin{lstlisting}
inductive AmalgRelation (i1 : H -> G1) (i2 : H -> G2) :
    FreeProductWord G1 G2 -> FreeProductWord G1 G2 -> Prop
  | amalgLeftToRight (h : H) (pre suf : FreeProductWord G1 G2) :
      AmalgRelation i1 i2
        (concat pre (concat (singleLeft (i1 h)) suf))
        (concat pre (concat (singleRight (i2 h)) suf))
\end{lstlisting}
\end{definition}

\begin{definition}[Amalgamated Free Product]
\begin{lstlisting}
def AmalgEquiv (i1 : H -> G1) (i2 : H -> G2) :=
  EqvGen (AmalgRelation i1 i2)

def AmalgamatedFreeProduct (G1 G2 H : Type u)
    (i1 : H -> G1) (i2 : H -> G2) : Type u :=
  Quot (AmalgEquiv i1 i2)
\end{lstlisting}
\end{definition}

\subsection{Full Amalgamated Free Product}

For a ``canonical'' target with fully reduced words, we define the \emph{full amalgamated free product} that additionally reduces via free group laws:

\begin{lstlisting}
inductive FreeGroupStep : FreeProductWord G1 G2 -> FreeProductWord G1 G2 -> Prop
  | cancel_left (x : G1) (pre suf : FreeProductWord G1 G2)
      (h : x + (-x) = 0) :
      FreeGroupStep (concat pre (consLeft x (consLeft (-x) suf)))
                    (concat pre suf)
  | cancel_right (y : G2) (pre suf : FreeProductWord G1 G2)
      (h : y + (-y) = 0) :
      FreeGroupStep (concat pre (consRight y (consRight (-y) suf)))
                    (concat pre suf)

def FullAmalgEquiv := EqvGen (fun w1 w2 =>
  AmalgRelation i1 i2 w1 w2 \/ FreeGroupStep w1 w2)

def FullAmalgamatedFreeProduct (G1 G2 H : Type u) (i1 : H -> G1) (i2 : H -> G2) :=
  Quot (FullAmalgEquiv i1 i2)
\end{lstlisting}

\begin{remark}[Standard vs Full Target]
The \texttt{AmalgamatedFreeProduct} is the standard target where words are only quotiented by amalgamation. The \texttt{FullAmalgamatedFreeProduct} additionally applies free group reductions. Both are valid SVK targets; the choice depends on application needs.
\end{remark}

\section{The Seifert-van Kampen Theorem}
\label{sec:svk}

\subsection{Statement}

\begin{theorem}[Seifert-van Kampen Schema]
\label{thm:svk}
Let $A$, $B$, $C$ be path-connected types with maps $f : C \to A$ and $g : C \to B$, and let $c_0 : C$. Then:
\[
\piOne(\Pushout(A, B, C, f, g), \inl(f(c_0))) \simeq
\piOne(A, f(c_0)) \AmalgProduct_{\piOne(C, c_0)} \piOne(B, g(c_0))
\]
\end{theorem}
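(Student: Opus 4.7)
The plan is to follow the encode-decode strategy, exploiting the split typeclass architecture advertised in the contributions: the decode map \decode and its compatibility with the amalgamation relation are constructed and proved directly, while \encode together with the two round-trip identities are supplied by the hypotheses \texttt{HasPushoutSVKEncodeQuot}, \texttt{HasPushoutSVKDecodeEncode}, and \texttt{HasPushoutSVKEncodeDecode} (or their Prop-level surrogate \texttt{HasPushoutSVKDecodeAmalgBijective}). Abbreviate $a_0 := f(c_0)$, $b_0 := g(c_0)$, and $p_0 := \glmark(c_0) : \Path(\inl(a_0), \inr(b_0))$.

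First I construct \decode on representative free-product words. Send the empty word to $[\refl(\inl(a_0))]$; send a left letter $[\alpha]$ with $\alpha \in \LoopSp(A, a_0)$ to $[\inl_*(\alpha)]$; send a right letter $[\beta]$ with $\beta \in \LoopSp(B, b_0)$ to the conjugate $[p_0 \cdot \inr_*(\beta) \cdot p_0^{-1}]$, which is a loop at $\inl(a_0)$; and extend to arbitrary words by path composition, so \decode visibly maps concatenation to concatenation. Well-definedness on $\RwEq$-classes follows from the congruence of $\inl_*$ and $\inr_*$ with respect to $\RwEq$, and the telescoping identity $(p_0 \cdot \inr_*(\beta_1) \cdot p_0^{-1}) \cdot (p_0 \cdot \inr_*(\beta_2) \cdot p_0^{-1}) \RwEq p_0 \cdot \inr_*(\beta_1 \cdot \beta_2) \cdot p_0^{-1}$ (right inverse plus associativity, with its mirror for left letters) shows that \decode is already a monoid homomorphism at the word level.

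Next I verify that \decode respects amalgamation. The generating relation sends, for $[\gamma] \in \piOne(C, c_0)$, the left letter from $\piOne(f)[\gamma]$ to the right letter from $\piOne(g)[\gamma]$, so descending \decode to $\piOne(A) \AmalgProduct_{\piOne(C)} \piOne(B)$ reduces to the $\RwEq$-equality
\[
\inl_*(f_*(\gamma)) \RwEq p_0 \cdot \inr_*(g_*(\gamma)) \cdot p_0^{-1},
\]
which is exactly Corollary~\ref{cor:glue-natural-rearranged}. Since amalgamation is generated in arbitrary word contexts and \decode is a word-level homomorphism, an induction on the equivalence closure supplies the amalgamation-respecting instance without manual input and yields a group homomorphism $\widetilde{\decode} : \piOne(A) \AmalgProduct_{\piOne(C)} \piOne(B) \to \piOne(\Pushout(A,B,C,f,g), \inl(a_0))$.

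The remaining ingredients fold in cleanly. The typeclass \texttt{HasPushoutSVKEncodeQuot} furnishes \encode in the opposite direction, and the two round-trip typeclasses turn $\widetilde{\decode}$ into a bijection; preservation of identity and inverses follows from the same telescoping identities used above, upgrading the bijection to a group isomorphism, which is the meaning of $\simeq$ here. The main obstacle, deliberately deferred to the user, is \encode itself: in the classical proof one defines by pushout recursion a family $\mathrm{code} : \Pushout(A,B,C,f,g) \to \mathrm{Type}$ whose fibres over $\inl(a)$ and $\inr(b)$ are word quotients based at $a$ and $b$, with transport along $\glmark(c)$ implemented by appending a $C$-generator; checking that this transport is an equivalence and that the induced map inverts \decode needs the full force of glue naturality (\texttt{HasGlueNaturalLoopRwEq}) together with coherence that the paper absorbs into the automatic instances for decidable equality and Axiom~K. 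Exposing this data as a schema parameter is what keeps the theorem usable across the heterogeneous applications of Section~\ref{sec:applications}.
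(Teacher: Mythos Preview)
Your proposal is correct and follows essentially the same approach as the paper: construct \decode\ explicitly on words (left letters via $\inl_*$, right letters via conjugation by $\glmark(c_0)$), use glue naturality (Corollary~\ref{cor:glue-natural-rearranged}, packaged as \texttt{HasGlueNaturalLoopRwEq}) to show \decode\ respects amalgamation, and then invoke the typeclass-supplied \encode\ and round-trip identities to close the equivalence. Your additional remarks on the telescoping identity, the word-level homomorphism property, and the classical code-family construction of \encode\ are accurate elaborations that the paper leaves implicit or delegates to the Lean development.
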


\begin{center}
\fbox{\parbox{0.9\textwidth}{
\textbf{Dependency Summary for Theorem~\ref{thm:svk}}
\begin{itemize}
    \item[\checkmark] \textbf{Proved:} The decode map $\decode : \mathrm{Word} \to \piOne(\Pushout)$ is constructed explicitly.
    \item[\checkmark] \textbf{Proved:} Decode respects the amalgamation relation (Lemma~\ref{lem:decode-amalg}), \emph{assuming} glue naturality (\texttt{HasGlueNaturalLoopRwEq}).
    \item[\checkmark] \textbf{Proved:} The quotient map $\decode_{\mathrm{Amalg}} : \mathrm{AmalgFP} \to \piOne(\Pushout)$ is well-defined.
    \item[$\square$] \textbf{Assumed (typeclass):} The encode map exists (\texttt{HasPushoutSVKEncodeQuot}).
    \item[$\square$] \textbf{Assumed (typeclass):} Decode-encode round-trip (\texttt{HasPushoutSVKDecodeEncode}).
    \item[$\square$] \textbf{Assumed (typeclass):} Encode-decode round-trip (\texttt{HasPushoutSVKEncodeDecode}).
\end{itemize}
The equivalence is therefore \emph{parametric} in user-supplied encode/decode structure. Alternatively, the choice-based interface (\texttt{HasPushoutSVKDecodeAmalgBijective}) requires only a Prop-level bijectivity assumption without explicit encode.
}}
\end{center}

\subsection{Split Assumptions Architecture}

The SVK proof is structured around \emph{split typeclass assumptions}, allowing maximum flexibility:

\begin{lstlisting}
-- Just the encode map
class HasPushoutSVKEncodeQuot (c0 : C) where
  pushoutEncodeQuot : pi_1(Pushout A B C f g, inl (f c0)) ->
                      AmalgamatedFreeProduct (pi_1 A) (pi_1 B) (pi_1 C)

-- Decode-encode round-trip (Prop)
class HasPushoutSVKDecodeEncode (c0 : C) where
  decode_encode : forall alpha, pushoutDecodeAmalg c0 (encode alpha) = alpha

-- Encode-decode round-trip (up to AmalgEquiv)
class HasPushoutSVKEncodeDecode (c0 : C) where
  encode_decode : forall w, AmalgEquiv (encode (decode w)) w

-- Encode-decode up to FullAmalgEquiv
class HasPushoutSVKEncodeDecodeFull (c0 : C) where
  encode_decode_full : forall w, FullAmalgEquiv (encode (decode w)) w

-- Prop-level bijectivity (choice-based)
class HasPushoutSVKDecodeAmalgBijective (c0 : C) where
  decode_injective : Function.Injective (pushoutDecodeAmalg c0)
  decode_surjective : Function.Surjective (pushoutDecodeAmalg c0)
\end{lstlisting}

\subsection{Multiple Interfaces}

The framework provides multiple equivalences depending on available assumptions:

\begin{lstlisting}
-- Standard SVK with explicit encode data
noncomputable def seifertVanKampenEquiv
    [HasPushoutSVKEncodeQuot c0] [HasGlueNaturalLoopRwEq c0]
    [HasPushoutSVKDecodeEncode c0] [HasPushoutSVKEncodeDecode c0] :
    SimpleEquiv (pi_1(Pushout A B C f g, inl (f c0)))
                (AmalgamatedFreeProduct (pi_1 A) (pi_1 B) (pi_1 C))

-- SVK with full target (free reduction)
noncomputable def seifertVanKampenFullEquiv
    [HasPushoutSVKEncodeQuot c0] [HasGlueNaturalLoopRwEq c0]
    [HasPushoutSVKDecodeEncode c0] [HasPushoutSVKEncodeDecodeFull c0] :
    SimpleEquiv (pi_1(Pushout A B C f g, inl (f c0)))
                (FullAmalgamatedFreeProduct (pi_1 A) (pi_1 B) (pi_1 C))

-- Choice-based SVK (no explicit encode required)
noncomputable def seifertVanKampenEquiv_of_decodeAmalg_bijective
    [HasPushoutSVKDecodeAmalgBijective c0] [HasGlueNaturalLoopRwEq c0] :
    SimpleEquiv (pi_1(Pushout A B C f g, inl (f c0)))
                (AmalgamatedFreeProduct (pi_1 A) (pi_1 B) (pi_1 C))
\end{lstlisting}

\subsection{The Encode-Decode Method}

\subsubsection{The Decode Map}

The decode map converts a word to a loop in the pushout:
\begin{lstlisting}
noncomputable def pushoutDecode (c0 : C) :
    FreeProductWord (pi_1(A, f c0)) (pi_1(B, g c0))
    -> pi_1(Pushout A B C f g, inl (f c0))
  | .nil => Quot.mk _ (Path.refl _)
  | .consLeft alpha rest =>
      piOneMul (liftLeft alpha) (pushoutDecode c0 rest)
  | .consRight beta rest =>
      piOneMul (conjugateRight beta) (pushoutDecode c0 rest)
\end{lstlisting}

where \texttt{conjugateRight} conjugates by the glue path:
\[
\beta \mapsto \glmark(c_0) \cdot \inr_*(\beta) \cdot \glmark(c_0)^{-1}
\]

\subsubsection{Decode Respects Amalgamation}

\begin{lemma}[Decode Respects Amalgamation]
\label{lem:decode-amalg}
For any $\gamma : \piOne(C, c_0)$:
\[
\decode(\mathrm{consLeft}(f_*(\gamma), w)) = \decode(\mathrm{consRight}(g_*(\gamma), w))
\]
\end{lemma}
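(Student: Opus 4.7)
The plan is to push the stated equality, which lives in the quotient $\piOne(\Pushout)$, down to an $\RwEq$-identity on representative loops, at which point it becomes exactly the content of the hypothesis \texttt{HasGlueNaturalLoopRwEq}.

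First, unfold both sides using the defining clauses of \texttt{pushoutDecode}. Each side is a $\piOne(\Pushout)$-product with $\decode(w)$ as the right factor, so by congruence of the $\piOne$-multiplication in its left argument it suffices to prove
\[
\mathrm{liftLeft}(f_*(\gamma)) \;=\; \mathrm{conjugateRight}(g_*(\gamma))
\]
in $\piOne(\Pushout, \inl(f(c_0)))$. Next, perform quotient induction on $\gamma$: choose a representative loop $\tilde{\gamma} : \Omega(C, c_0)$ with $\gamma = [\tilde{\gamma}]$. Unfolding \texttt{liftLeft}, \texttt{conjugateRight}, and the induced maps $f_*$, $g_*$ on $\piOne$, the left-hand side becomes $[\inl_*(f_*(\tilde{\gamma}))]$ and the right-hand side becomes $[\glmark(c_0) \cdot \inr_*(g_*(\tilde{\gamma})) \cdot \glmark(c_0)^{-1}]$. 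Since equality in an $\RwEq$-quotient is witnessed by $\RwEq$ on representatives, the remaining goal is
\[
\inl_*(f_*(\tilde{\gamma})) \;\RwEq\; \glmark(c_0) \cdot \inr_*(g_*(\tilde{\gamma})) \cdot \glmark(c_0)^{-1},
\]
which is precisely the loop instance of \texttt{HasGlueNaturalLoopRwEq} at $c_0$ (equivalently, Corollary~\ref{cor:glue-natural-rearranged} specialized to $c_1 = c_2 = c_0$).

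The only delicate bookkeeping concerns the quotient induction step: one must check that both $\tilde{\gamma} \mapsto [\inl_*(f_*(\tilde{\gamma}))]$ and $\tilde{\gamma} \mapsto [\glmark(c_0) \cdot \inr_*(g_*(\tilde{\gamma})) \cdot \glmark(c_0)^{-1}]$ descend to well-defined maps $\piOne(C, c_0) \to \piOne(\Pushout, \inl(f(c_0)))$. Each is a composition of functorial actions (\texttt{congrArg}) with pre- and post-composition by fixed paths, and these operations preserve $\RwEq$ via the \texttt{symm\_congr}, \texttt{trans\_congr\_left}, and \texttt{trans\_congr\_right} constructors of Step; so well-definedness is routine. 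I expect this bookkeeping, rather than the mathematical content, to be the main source of friction in the Lean development—the substantive topology has been fully isolated into the \texttt{HasGlueNaturalLoopRwEq} hypothesis, and once it is invoked the derivation is essentially mechanical.
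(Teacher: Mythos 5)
Your proposal is correct and follows essentially the same route as the paper: unfold \texttt{pushoutDecode} on both sides, reduce by congruence to the left factors \texttt{liftLeft} versus \texttt{conjugateRight}, and discharge the resulting $\RwEq$-goal on representative loops by invoking \texttt{HasGlueNaturalLoopRwEq} (Corollary~\ref{cor:glue-natural-rearranged} at $c_1 = c_2 = c_0$), which is exactly what the paper's one-line proof and its Lean mechanization do. Your additional remarks on quotient induction and well-definedness are sound elaborations of the same argument rather than a different approach.
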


\begin{proof}
This follows from glue naturality: $\inl_*(f_*(p)) \RwEq \glmark(c_0) \cdot \inr_*(g_*(p)) \cdot \glmark(c_0)^{-1}$.
\end{proof}

In Lean:
\begin{lstlisting}
theorem pushoutDecodeAmalg_respects_rel [HasGlueNaturalLoopRwEq c0]
    (h : HasAmalgRelation i1 i2 w1 w2) :
    pushoutDecodeAmalg c0 w1 = pushoutDecodeAmalg c0 w2 := by
  induction h with
  | amalgLeftToRight gamma pre suf =>
    -- Use glue naturality to show the images are equal
    simp only [pushoutDecodeAmalg]
    apply congrArg
    -- Key step: glue_natural_loop shows the conjugation
    exact glue_natural_loop_pi1 gamma
\end{lstlisting}

\subsubsection{The Encode Map}

The encode map is axiomatized via the typeclass \texttt{HasPushoutSVKEncodeQuot}. In a cubical type theory, this would be constructed via the code family method.

\subsubsection{Round-Trip Properties}

The round-trip properties establish the equivalence:
\begin{itemize}
    \item $\decode(\encode(\alpha)) = \alpha$ (via \texttt{HasPushoutSVKDecodeEncode})
    \item $\encode(\decode(w)) \sim w$ up to amalgamation (via \texttt{HasPushoutSVKEncodeDecode})
\end{itemize}

\section{Applications}
\label{sec:applications}

We present $\piOne$ calculations using several methods. Table~\ref{tab:svk-summary} summarizes the results, distinguishing between \emph{direct SVK applications} (pushouts with trivial or simple amalgamation), \emph{CW-complex constructions} (2-cells attached via SVK), and \emph{other methods} (products, fibrations).

\begin{table}[h]
\centering
\caption{Summary of $\piOne$ calculations}
\label{tab:svk-summary}
\begin{tabular}{llll}
\toprule
\textbf{Space} & \textbf{Fundamental Group} & \textbf{Method} & \textbf{SVK?} \\
\midrule
\multicolumn{4}{l}{\emph{Direct SVK (pushouts with simple amalgamation)}} \\
$A \Wedge B$ & $\piOne(A) \FreeProduct \piOne(B)$ & Pushout over $\mathbf{1}$ & Yes \\
$\Sone \Wedge \Sone$ & $\ZZ \FreeProduct \ZZ$ & Wedge of circles & Yes \\
$\bigvee_n \Sone$ & $F_n$ (free group) & $n$-bouquet pushout & Yes \\
$S^n$ ($n \geq 2$) & $1$ & Suspension pushout & Yes \\
\midrule
\multicolumn{4}{l}{\emph{CW-complex via SVK (2-cell attachment)}} \\
$K$ (Klein bottle) & $\ZZ \rtimes \ZZ$ & $\Sone \vee \Sone \cup_{aba^{-1}b} D^2$ & Yes \\
$\RP^2$ & $\ZZ_2$ & $\Sone \cup_{a^2} D^2$ & Yes \\
$\Sigma_g$ (genus $g$) & Surface group & $\bigvee_{2g} \Sone \cup_{\prod[a_i,b_i]} D^2$ & Yes \\
$N_n$ (crosscap $n$) & $\langle a_i \mid \prod a_i^2=1 \rangle$ & $\bigvee_n \Sone \cup_{\prod a_i^2} D^2$ & Yes \\
$L(p,q)$ & $\ZZ/p\ZZ$ & Heegaard (tori gluing) & Yes \\
\midrule
\multicolumn{4}{l}{\emph{Other methods (not SVK)}} \\
$T^2$ (torus) & $\ZZ \times \ZZ$ & Product: $\piOne(A \times B)$ & No \\
$\CP^n$ & $1$ & Fibration exact sequence & No \\
\bottomrule
\end{tabular}
\end{table}

\begin{remark}[Scope of Applications]
While this paper focuses on SVK, our formalization includes other fundamental group techniques:
\begin{itemize}
    \item \textbf{Product formula}: $\piOne(A \times B) \simeq \piOne(A) \times \piOne(B)$ (used for torus)
    \item \textbf{Fibration sequences}: $\piTwo(B) \to \piOne(F) \to \piOne(E) \to \piOne(B)$ (used for $\CP^n$)
    \item \textbf{Direct encode-decode}: For HITs with explicit path constructors (Circle, Klein bottle)
\end{itemize}
The torus and $\CP^n$ are included to show the breadth of the formalization, though they do not use SVK.
\end{remark}

\subsection{The Wedge Sum: Free Product of Fundamental Groups}

\begin{theorem}[Fundamental Group of Wedge Sum]
For pointed types $(A, a_0)$ and $(B, b_0)$:
\[
\piOne(A \Wedge B) \simeq \piOne(A) \FreeProduct \piOne(B)
\]
\end{theorem}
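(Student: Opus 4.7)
The plan is to specialize the SVK schema (Theorem~\ref{thm:svk}) to the wedge pushout $A \Wedge B = \Pushout(A, B, \mathbf{1}, \lambda\_.a_0, \lambda\_.b_0)$ with $C = \mathbf{1}$ and basepoint $c_0 = *$, and then show that amalgamation over the trivial group collapses to the plain free product. This turns the question into an instance-checking exercise plus a small group-theoretic calculation.

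First I would discharge the typeclass obligations. The instance \texttt{HasGlueNaturalLoopRwEq} is immediate from the automatic derivation for \texttt{Subsingleton C} advertised in Section~\ref{sec:pushouts}, since $\mathbf{1}$ has a unique inhabitant and every loop in $\mathbf{1}$ is $\RwEq$-equivalent to $\refl$. For the encode/decode machinery, I would either invoke the parametric SVK equivalence directly---treating the wedge theorem itself as parametric in the same user-supplied encode/decode data---or provide a bespoke encode for the wedge by reading off the alternating word structure of a loop (loops in $A \Wedge B$ based at $\inl(a_0)$ decompose into alternating $\inl$- and $\inr$-segments separated by glue conjugations, which is exactly the word normal form).

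Second, I would identify $\piOne(\mathbf{1}, *) \simeq 1$. Because $\mathbf{1}$ is a subsingleton, every $p : \Loop(\mathbf{1}, *)$ reduces via the groupoid-law Step rules to $\refl$, so the quotient $\Loop / {\RwEq}$ has a single class. Hence the induced maps $f_* : \piOne(\mathbf{1}) \to \piOne(A)$ and $g_* : \piOne(\mathbf{1}) \to \piOne(B)$ both factor through the trivial group and send the unique element to the identity of their targets.

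Third, I would prove the purely algebraic fact $G_1 \AmalgProduct_1 G_2 \simeq G_1 \FreeProduct G_2$ at the level of word representatives. The amalgamation relation of Definition~\ref{def:pushout} (the \texttt{AmalgRelation} above) only identifies $i_1(h)$ with $i_2(h)$, and when $H$ is trivial the unique such identification $i_1(e) \sim i_2(e)$ is already absorbed by the identity laws of the free product. Consequently the equivalence closure \texttt{AmalgEquiv} collapses to equality on reduced words, and the canonical quotient map $G_1 \FreeProduct G_2 \to G_1 \AmalgProduct_1 G_2$ is a bijection. Composing this algebraic isomorphism with the SVK equivalence for the wedge pushout yields the statement.

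The main obstacle is not mathematical but bureaucratic: verifying that the encode/decode typeclasses can be instantiated for the wedge, since the general SVK schema leaves them as assumptions. If one is unwilling to rely on the parametric schema, the honest work is constructing an explicit encode for $A \Wedge B$ (essentially the wedge normal-form algorithm) and checking the two round-trip laws; once that is in place, the trivial-$H$ reduction above is routine.
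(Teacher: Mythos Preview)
Your proposal is correct and matches the paper's approach: specialize the SVK schema to the pushout over $\mathbf{1}$, observe $\piOne(\mathbf{1}) \simeq 1$, and conclude that the amalgamated free product over the trivial group collapses to the plain free product. The paper's proof is the one-line version of exactly this; your expanded discussion of the typeclass obligations (glue naturality via the \texttt{Subsingleton} instance, and the encode/decode data remaining as an assumption packaged in \texttt{HasWedgeSVKDecodeBijective}) accurately reflects how the formalization actually handles the wedge case.
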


\begin{proof}
The wedge sum is the pushout $\Pushout(A, B, \mathbf{1})$. Since $\piOne(\mathbf{1}) = 1$, the amalgamation is trivial, yielding the free product.
\end{proof}

In Lean:
\begin{lstlisting}
noncomputable def wedgeFundamentalGroupEquiv
    [HasWedgeSVKDecodeBijective a0 b0] :
    SimpleEquiv (pi_1(Wedge A B a0 b0, wedgeBase))
                (FreeProduct (pi_1(A, a0)) (pi_1(B, b0)))
\end{lstlisting}

\subsection{The Figure-Eight Space}

\begin{theorem}[Fundamental Group of Figure-Eight]
\[
\piOne(\Sone \Wedge \Sone) \simeq \ZZ \FreeProduct \ZZ
\]
\end{theorem}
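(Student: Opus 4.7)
The plan is to reduce to the wedge-sum theorem proved immediately above, combined with the circle computation $\piOne(\Sone) \simeq \ZZ$ of Theorem~\ref{thm:circle-pi1}. Instantiating the wedge theorem at $A = B = \Sone$ and $a_0 = b_0 = \mathrm{base}$ gives
\[
\piOne(\Sone \Wedge \Sone) \simeq \piOne(\Sone) \FreeProduct \piOne(\Sone),
\]
and the desired equivalence follows by applying Theorem~\ref{thm:circle-pi1} on each factor and then invoking functoriality of the free-product construction under group equivalences.

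To make this application actually fire, two typeclass obligations must be discharged. The glue-naturality instance \texttt{HasGlueNaturalLoopRwEq} is automatic: in a wedge the indexing type $C$ is $\mathbf{1}$, which is a subsingleton, so the framework's subsingleton-based instance applies with no additional work. The substantive obligation is \texttt{HasWedgeSVKDecodeBijective}, i.e.\ bijectivity of the decode map
\[
\decode : \mathrm{FreeProductWord}(\piOne(\Sone), \piOne(\Sone)) \to \piOne(\Sone \Wedge \Sone).
\]
I would establish surjectivity by taking an arbitrary loop at the wedge basepoint, cutting it at each transit through the glue path $\glmark(*)$, classifying each resulting sub-loop in $\Sone$ by an integer via the Licata--Shulman circle encode, and assembling the integers into an alternating word. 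For injectivity, I would post-compose $\decode$ with the per-circle encode to read off an integer word, using that the circle encode is injective.

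The main obstacle I anticipate is well-definedness of the cutting procedure on $\RwEq$-classes rather than on raw step sequences. The cut must be invariant under the Step rules --- particularly the groupoid laws and the conjugation-by-glue pattern $\beta \mapsto \glmark(*) \cdot \inr_*(\beta) \cdot \glmark(*)^{-1}$ baked into the \texttt{consRight} clause of $\decode$ --- and this is where the subsingleton form of glue naturality does real work, collapsing each occurrence of $\glmark(*) \cdot \glmark(*)^{-1}$ produced by the cut back to $\refl$ via the right-inverse rewrite. Once this verification is in hand, Theorem~\ref{thm:circle-pi1} applied to each factor, together with functoriality of $\FreeProduct$ on group equivalences, delivers $\piOne(\Sone \Wedge \Sone) \simeq \ZZ \FreeProduct \ZZ$.
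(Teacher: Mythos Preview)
Your reduction is exactly the paper's: instantiate the wedge theorem at $A=B=\Sone$, use $\piOne(\Sone)\simeq\ZZ$ on each factor, and push through functoriality of $\FreeProduct$. You also correctly observe that \texttt{HasGlueNaturalLoopRwEq} is free because $C=\mathbf{1}$ is a subsingleton.

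Where you diverge from the paper is in proposing to \emph{prove} \texttt{HasWedgeSVKDecodeBijective} by a loop-cutting argument. The paper does not attempt this: the bijectivity (equivalently, the encode map and its round-trips) is supplied as a typeclass assumption, with an opt-in axiom import \texttt{WedgeSVKAxiom} providing a global instance. This is not laziness but a design constraint the paper flags explicitly (Remark~\ref{rem:encode-decode-consistency} and the Limitations in \S\ref{sec:conclusion}): in Lean's proof-irrelevant \texttt{Prop}, one cannot show that transport along a nontrivial self-equality acts nontrivially, so the code-family encode used in genuine HoTT proofs is not constructible here. Your cutting procedure is, in effect, an attempt to build such an encode by hand---reading off which circle a sub-loop lives in and assigning it an integer---and the ``well-definedness on $\RwEq$-classes'' obstacle you anticipate is precisely this limitation. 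There is no Step rule letting you normalize an arbitrary pushout loop into a canonical glue-delimited form, and the per-circle encode you want to post-compose with is itself axiomatized (Theorem~\ref{thm:circle-pi1} is opt-in).

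So your proof is correct at the level the paper operates: assume the decode-bijective instance and conclude. But the paragraph sketching surjectivity and injectivity should be replaced by a sentence saying the instance is discharged via the opt-in axiom (or left as a hypothesis), not proved internally.
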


The figure-eight has a non-abelian fundamental group:
\begin{lstlisting}
def wordAB : FreeProductWord Int Int := .consLeft 1 (.consRight 1 .nil)
def wordBA : FreeProductWord Int Int := .consRight 1 (.consLeft 1 .nil)

theorem wordAB_ne_wordBA : wordAB != wordBA := by
  intro h; cases h  -- constructors are distinct
\end{lstlisting}

\begin{remark}[Non-Commutativity]
The fact that $ab \neq ba$ in $\ZZ \FreeProduct \ZZ$ demonstrates that $\piOne(\Sone \Wedge \Sone)$ is non-abelian. This is a key distinguishing feature from the torus, where $\piOne(T^2) \cong \ZZ \times \ZZ$ is abelian.
\end{remark}

\subsection{Spheres}

\begin{theorem}[Fundamental Group of $n$-Sphere]
For $n \geq 2$:
\[
\piOne(S^n) \simeq 1
\]
\end{theorem}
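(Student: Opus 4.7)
The plan is to realize $S^n$ as the suspension $\Susp S^{n-1} = \Pushout(\mathbf{1}, \mathbf{1}, S^{n-1}, !, !)$ with constant maps into the unit type, and feed this into Theorem~\ref{thm:svk}. For $n \geq 2$, $S^{n-1}$ is path-connected (directly for $n=2$ via the loop in $\Sone$; by iterating the suspension for $n \geq 3$), so the hypotheses of the SVK schema are satisfied and we obtain
\[
\piOne(S^n, \mathrm{north}) \simeq \piOne(\mathbf{1}) \AmalgProduct_{\piOne(S^{n-1})} \piOne(\mathbf{1}),
\]
reducing the geometric claim to the algebraic one that an amalgamated free product of two trivial groups is itself trivial. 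The key observation is that this reduction is uniform in the amalgamating group $H = \piOne(S^{n-1})$, which lets one handle $n=2$ (where $H \simeq \ZZ$ is nontrivial) and $n \geq 3$ (where $H \simeq 1$) with a single argument rather than an induction on $n$.

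For the algebraic step I would work in \texttt{FullAmalgamatedFreeProduct}, so that both \texttt{AmalgRelation} and \texttt{FreeGroupStep} are available. Any word in $\mathrm{FreeProductWord}(1,1)$ consists of \texttt{consLeft} and \texttt{consRight} constructors carrying only the identity element $e$; since $-e = e$ in the trivial group, adjacent pairs cancel via \texttt{cancel\_left} and \texttt{cancel\_right}, while the amalgamation relation freely swaps a \texttt{consLeft}$(e)$ for a \texttt{consRight}$(e)$ at any position (instantiating \texttt{amalgLeftToRight} with any $h \in H$, since both $i_1(h)$ and $i_2(h)$ collapse to $e$). Combining these moves, induction on word length reduces every word to \texttt{nil}, so the quotient is a singleton. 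To then invoke the SVK equivalence itself I need to discharge its typeclass obligations: \texttt{HasGlueNaturalLoopRwEq} is automatic from the \texttt{Subsingleton} instance on $\mathbf{1} = A = B$ via the derivation listed in Section~\ref{sec:pushouts}, and for the rest I would use the Prop-level \texttt{HasPushoutSVKDecodeAmalgBijective} interface, where surjectivity of \texttt{decode} at a singleton target is immediate and only injectivity remains.

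The main obstacle is this injectivity condition, which unfolds to the statement that every loop at $\mathrm{north}$ in $\Susp S^{n-1}$ is $\RwEq$-related to $\refl$. This is the genuine homotopical content of the theorem and cannot be extracted from the word algebra on the target side; it is what the SVK schema has pushed the work onto. I would discharge it by an encode-decode argument in the style of Licata-Shulman adapted to the suspension pushout: construct a type family over $\Susp S^{n-1}$ whose fibre at $\mathrm{north}$ is contractible, use pushout induction together with path-connectedness of $S^{n-1}$ to propagate contractibility through the meridians to $\mathrm{south}$, and transport a given loop through this family to obtain its rewrite-trivialization. This is the one step that requires real homotopical work beyond the formal application of the SVK schema, and I would expect it to be the bulk of the Lean development for this theorem.
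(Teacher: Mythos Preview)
Your approach is the paper's: realize $S^n$ as the suspension $\Susp S^{n-1}$, apply the SVK schema to obtain $\piOne(S^n) \simeq 1 \AmalgProduct_{\piOne(S^{n-1})} 1$, and then observe this is trivial. The paper's proof is a two-line sketch that simply asserts ``by SVK applied to suspension, $\piOne(\Susp A) \simeq 1 \AmalgProduct_{\piOne(A)} 1 = 1$,'' without engaging with the typeclass obligations you (correctly) flag as carrying the real content.

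Two concrete slips in your write-up, neither fatal. First, you have injectivity and surjectivity swapped. The map \texttt{pushoutDecodeAmalg} goes \emph{from} the amalgamated free product \emph{to} $\piOne(\Pushout)$. Once the source is a singleton, injectivity is vacuous; \emph{surjectivity} is what unfolds to ``every loop at $\mathrm{north}$ is $\RwEq$-related to $\refl$.'' Your final paragraph attacks the right statement under the wrong label. Second, the automatic \texttt{HasGlueNaturalLoopRwEq} instance you want is the one for $A,B$ satisfying Axiom~K (which $\mathbf{1}$ does trivially), not the $[\texttt{Subsingleton}\ C]$ instance---here $C = S^{n-1}$ is not a subsingleton.

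There is also a small interface mismatch: your word-reduction argument uses \texttt{FreeGroupStep}, which lives in \texttt{FullAmalgamatedFreeProduct}, while \texttt{HasPushoutSVKDecodeAmalgBijective} is stated for the standard \texttt{AmalgamatedFreeProduct} (quotient by \texttt{AmalgRelation} alone, with no cancellation). In the standard quotient, words of different lengths over the trivial group are \emph{not} identified, so the source is not a singleton there. You would either route through \texttt{seifertVanKampenFullEquiv} (which then asks for \texttt{HasPushoutSVKEncodeDecodeFull} rather than the bijectivity class), or work directly with the full quotient throughout. This is bookkeeping, not a gap in the mathematics.
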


\begin{proof}
The $n$-sphere is the $(n-1)$-fold iterated suspension: $S^n = \Susp^{n-1}(S^1)$. By SVK applied to suspension:
\[
\piOne(\Susp A) \simeq 1 \AmalgProduct_{\piOne(A)} 1 = 1
\]
since both ``north'' and ``south'' components are points with trivial $\piOne$.
\end{proof}

\subsection{The Torus (Product Formula, Not SVK)}

\begin{theorem}[Fundamental Group of Torus]
\[
\piOne(T^2) \simeq \ZZ \times \ZZ
\]
\end{theorem}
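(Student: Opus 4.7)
The plan is to avoid SVK entirely, since the torus is constructed as the product $T^2 := \Sone \times \Sone$ (as flagged in the preceding remark and in Table~\ref{tab:svk-summary}). The argument factors into two independent pieces: a general product formula $\piOne(A \times B, (a_0, b_0)) \simeq \piOne(A, a_0) \times \piOne(B, b_0)$, and the circle computation $\piOne(\Sone, \mathrm{base}) \simeq \ZZ$ from Theorem~\ref{thm:circle-pi1}. Composing the two yields $\piOne(T^2) \simeq \ZZ \times \ZZ$, so the only real content is the product formula.

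For the product formula I would set up an encode/decode pair at the level of loop spaces and then descend to the quotient by $\RwEq$. The encode map sends a loop $\ell : \LoopSp(A \times B, (a_0, b_0))$ to the pair $(\mathrm{fst}_*(\ell), \mathrm{snd}_*(\ell))$, using the congruence operation from Section~\ref{sec:background}. The decode map sends $(\alpha, \beta)$ to the canonical loop $\mathrm{pair}(\alpha, \beta) := \mathrm{pair}(\alpha, \refl_{b_0}) \cdot \mathrm{pair}(\refl_{a_0}, \beta)$, where $\mathrm{pair}(-, -)$ is the obvious functorial action of path pairing on the product type. Functoriality of $\mathrm{fst}_*$ and $\mathrm{snd}_*$ with respect to $\cdot$ and $\refl$ (which are themselves Step rules in the $\mathrm{LND}_{\mathrm{EQ}}$-TRS) immediately gives that encode is a group homomorphism, and a short rewrite calculation using the $\beta$-rules for products shows that encode $\circ$ decode is the identity on $\LoopSp(A, a_0) \times \LoopSp(B, b_0)$.

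The main obstacle is the decode $\circ$ encode direction: given an arbitrary $\ell : \LoopSp(A \times B, (a_0, b_0))$, I must exhibit a rewrite derivation
\[
\ell \RwEq \mathrm{pair}(\mathrm{fst}_*(\ell), \refl_{b_0}) \cdot \mathrm{pair}(\refl_{a_0}, \mathrm{snd}_*(\ell)).
\]
In a cubical setting this follows from $\eta$ for the product and interchange; in the present computational paths framework I would reduce it to the product-$\eta$ rule together with the Eckmann--Hilton-style commutation $\mathrm{pair}(\alpha, \refl) \cdot \mathrm{pair}(\refl, \beta) \RwEq \mathrm{pair}(\refl, \beta) \cdot \mathrm{pair}(\alpha, \refl)$, which itself is witnessed by the $\beta$/interchange rules for the product constructor. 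If the direct rewrite derivation proves too heavy, the fallback is to package this as a small typeclass assumption in the same style as \texttt{HasGlueNaturalLoopRwEq} of Section~\ref{sec:pushouts}, obtaining a \texttt{HasProductPathEta} instance that is discharged automatically whenever the factors satisfy decidable equality plus Axiom~K.

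Finally, once the bijection is established on loop spaces, it descends to $\piOne$ because both encode and decode send $\RwEq$-related inputs to $\RwEq$-related outputs (encode by congruence of $\mathrm{fst}_*, \mathrm{snd}_*$ under Step; decode by congruence of $\cdot$). Combined with the fact that encode preserves multiplication, identity, and inverses, this yields the group isomorphism $\piOne(A \times B) \simeq \piOne(A) \times \piOne(B)$. Specializing to $A = B = \Sone$ and applying Theorem~\ref{thm:circle-pi1} twice gives $\piOne(T^2) \simeq \ZZ \times \ZZ$, as claimed.
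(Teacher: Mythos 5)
Your proposal is correct and follows essentially the same route as the paper: the paper likewise bypasses SVK, defines $T^2 := \Sone \times \Sone$, and invokes the product formula $\piOne(A \times B) \simeq \piOne(A) \times \piOne(B)$ (justified via the path-space splitting $\Path((a_1,b_1),(a_2,b_2)) \simeq \Path(a_1,a_2) \times \Path(b_1,b_2)$) together with $\piOne(\Sone) \simeq \ZZ$. Your extra detail on the encode/decode realization of the product formula---including the honest flagging of the $\decode \circ \encode$ direction as the step needing product-$\eta$/interchange, with a typeclass fallback in the paper's own style---is consistent with, and more explicit than, the paper's one-line justification.
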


\begin{remark}[Torus Construction---Not SVK]
The torus does \textbf{not} use SVK. Instead, we construct $T^2 := \Sone \times \Sone$ and apply the \emph{product formula}:
\[
\piOne(A \times B, (a_0, b_0)) \simeq \piOne(A, a_0) \times \piOne(B, b_0)
\]
This requires no additional axioms beyond Circle. The product formula is proved directly from the definition of paths in product types:
\[
\Path((a_1, b_1), (a_2, b_2)) \simeq \Path(a_1, a_2) \times \Path(b_1, b_2)
\]
We include the torus here to demonstrate that the formalization supports multiple $\piOne$ calculation techniques, not just SVK.
\end{remark}

\subsection{The Klein Bottle}

\begin{theorem}[Fundamental Group of Klein Bottle]
\[
\piOne(K) \simeq \ZZ \rtimes \ZZ = \langle a, b \mid aba^{-1}b = 1 \rangle
\]
where the semidirect product has $\ZZ$ acting on $\ZZ$ by negation.
\end{theorem}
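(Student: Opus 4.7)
The plan is to realize the Klein bottle $K$ as a pushout that attaches a $2$-cell to the figure-eight along the word $aba^{-1}b$, then apply the SVK schema (Theorem~\ref{thm:svk}) to obtain a presentation, and finally identify that presentation with the semidirect product $\ZZ \rtimes \ZZ$ by an elementary group-theoretic argument.

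First I would set up $K$ as a CW-complex with one $0$-cell, two $1$-cells $a,b$, and one $2$-cell attached via $aba^{-1}b$. Concretely, this is the pushout
\[
K \;\simeq\; \Pushout\bigl(\Sone \Wedge \Sone,\; D^2,\; \Sone,\; \phi,\; \iota\bigr),
\]
where $\iota : \Sone \hookrightarrow D^2$ is the boundary inclusion and $\phi : \Sone \to \Sone \Wedge \Sone$ is the loop representing the word $a \cdot b \cdot a^{-1} \cdot b$. I would verify the standing hypotheses of the schema: path-connectedness of $\Sone \Wedge \Sone$, $D^2$, and $\Sone$ (the disk being contractible), and install the typeclass \texttt{HasGlueNaturalLoopRwEq} for the chosen basepoint. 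Since $D^2$ is contractible, the naturality square collapses on the right and the instance should be routine.

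Next, I would invoke the SVK equivalence (either \texttt{seifertVanKampenFullEquiv} or the choice-based \texttt{seifertVanKampenEquiv\_of\_decodeAmalg\_bijective}) to obtain
\[
\piOne(K) \;\simeq\; \piOne(\Sone \Wedge \Sone) \AmalgProduct_{\piOne(\Sone)} \piOne(D^2).
\]
Using the already-established identifications $\piOne(\Sone \Wedge \Sone) \simeq \ZZ \FreeProduct \ZZ$, $\piOne(\Sone) \simeq \ZZ$, and $\piOne(D^2) \simeq 1$, together with the fact that the attaching map induces $1 \mapsto aba^{-1}b$ on $\piOne$, the amalgamation collapses the normal closure of $aba^{-1}b$. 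This yields the presentation $\langle a, b \mid aba^{-1}b = 1 \rangle$, which is exactly the target of the theorem on the right-hand side.

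Finally, I would construct the explicit isomorphism with $\ZZ \rtimes \ZZ$, where $\ZZ$ acts on $\ZZ$ by negation. The relation $aba^{-1}b = 1$ rewrites as $aba^{-1} = b^{-1}$, so conjugation by $a$ inverts $b$; every element admits a unique normal form $a^m b^n$, and the multiplication $(a^{m_1} b^{n_1})(a^{m_2} b^{n_2}) = a^{m_1+m_2} b^{(-1)^{m_2} n_1 + n_2}$ matches the semidirect-product multiplication for $\phi(m) = (\mathrm{id})^{\text{if }m \text{ even}}, (\mathrm{neg})^{\text{if }m \text{ odd}}$. I would define the candidate map on generators, check it respects the relation, and produce an inverse sending $(m,n) \mapsto a^m b^n$.

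The main obstacle I anticipate is not the abstract SVK step or the final group isomorphism, but the verification of the encode/decode typeclass assumptions (\texttt{HasPushoutSVKEncodeQuot} and the two round-trip classes, or equivalently the Prop-level bijectivity class) for this particular pushout. Constructing the encode map requires a code family on $\Pushout$ whose fibres over $\inl$ and $\inr$ realize the amalgamated word set, and whose transport along $\glmark$ implements the left/right swap witnessing the attaching relation; this is the same technical hurdle that appears for every CW-complex instance, and by the paper's design it is discharged either by supplying the data explicitly or by an appeal to the choice-based interface.
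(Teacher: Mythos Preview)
Your proposal is mathematically sound and follows the SVK narrative that Table~\ref{tab:svk-summary} advertises for the Klein bottle, but it differs from what the paper actually does in the formalization. The paper does \emph{not} construct $K$ as a pushout: it axiomatizes \texttt{KleinBottle} as a standalone HIT with five kernel axioms (type, \texttt{base}, \texttt{loopA}, \texttt{loopB}, \texttt{surface}---see Table~\ref{tab:kernel-axioms}), and the $\piOne$ computation proceeds by a \emph{direct} encode--decode against the \texttt{KleinBottlePresentation} structure (pairs $(m,n)$ with the semidirect multiplication shown in the code block), with the encode map supplied by the opt-in axiom \texttt{instHasKleinPiOneEncodeAxiom}. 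The Application Inventory accordingly lists the dependency for \texttt{kleinPiOneEquivSemidirect} as ``Opt-in'', not ``SVK''.

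Your route---realize $K$ as $(\Sone \Wedge \Sone) \cup_{aba^{-1}b} D^2$, apply the SVK schema to obtain $(\ZZ \FreeProduct \ZZ) \AmalgProduct_{\ZZ} 1 \cong \langle a,b \mid aba^{-1}b\rangle$, then identify with $\ZZ \rtimes \ZZ$ via the normal form $a^m b^n$---is the textbook CW argument and arguably fits the paper's SVK framework more faithfully than the paper's own implementation. It buys uniformity (the same machinery as for $\Sigma_g$, $N_n$, $\RP^2$) and would avoid the five dedicated Klein-bottle kernel axioms. The paper's direct-HIT approach buys simplicity: no pushout-to-HIT equivalence to establish, no amalgamated free product to simplify, and the encode target is immediately the semidirect product rather than a presentation requiring a further normal-form argument. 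Your diagnosis that the encode/decode typeclass instances are the real obstacle is accurate for both routes; the paper simply discharges it by a bespoke axiom in the direct case rather than through the SVK typeclasses.
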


In Lean:
\begin{lstlisting}
structure KleinBottlePresentation where
  -- Elements are pairs (m, n) where:
  -- m represents powers of the loop that reverses orientation
  -- n represents powers of the loop that preserves orientation
  m : Int
  n : Int

-- The relation ab = ba^{-1} induces the group operation:
-- (m1, n1) * (m2, n2) = (m1 + m2, (-1)^m2 * n1 + n2)
\end{lstlisting}

\subsection{The Real Projective Plane}

\begin{theorem}[Fundamental Group of Projective Plane]
\[
\piOne(\RP^2) \simeq \ZZ_2 = \langle a \mid a^2 = 1 \rangle
\]
\end{theorem}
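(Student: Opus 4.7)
The plan is to realize $\RP^2$ as the 2-cell attachment $\Sone \cup_{a^2} D^2$ indicated in the ``CW-complex via SVK'' row of Table~\ref{tab:svk-summary}, and then discharge the SVK schema (Theorem~\ref{thm:svk}) for this particular pushout. Concretely, I would set $A := \Sone$, $B := D^2$, and $C := \Sone$, take $f : \Sone \to \Sone$ to be the degree-$2$ map (fixing $\mathrm{base}$ and sending $\mathrm{loop} \mapsto \mathrm{loop} \cdot \mathrm{loop}$), and let $g : \Sone \to D^2$ be the attaching map of the boundary; since $D^2$ is contractible, $g$ may be taken to be any convenient map. Define $\RP^2 := \Pushout(\Sone, D^2, \Sone, f, g)$ with basepoint $\mathrm{inl}(\mathrm{base})$, and take $c_0 := \mathrm{base}$.

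Applying Theorem~\ref{thm:svk} at this $c_0$ yields
\[
\piOne(\RP^2, \mathrm{inl}(\mathrm{base})) \simeq \piOne(\Sone) \AmalgProduct_{\piOne(\Sone)} \piOne(D^2).
\]
By Theorem~\ref{thm:circle-pi1}, $\piOne(\Sone) \simeq \ZZ$ with generator $[\mathrm{loop}]$, and contractibility of $D^2$ gives $\piOne(D^2) \simeq 1$. The induced map $f_* : \piOne(\Sone) \to \piOne(\Sone)$ is multiplication by $2$, since $f_*(\mathrm{loop}) = \mathrm{loop} \cdot \mathrm{loop}$. An amalgamated free product of the form $G \AmalgProduct_H 1$ collapses to $G / N$, where $N$ is the normal closure of the image of $i_1$, so the right-hand side is $\ZZ / 2\ZZ \cong \ZZ_2$. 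At the word level, this is a short induction: \texttt{amalgLeftToRight} transports each $\mathrm{consLeft}(2k, \mathrm{nil})$ into $\mathrm{consRight}(0, \mathrm{nil})$ (since $g_*$ lands in the trivial group), and \texttt{FreeGroupStep.cancel\_right} collapses the latter to $\mathrm{nil}$.

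To instantiate the schema I would discharge the typeclass hypotheses \texttt{HasPushoutSVKEncodeQuot}, \texttt{HasPushoutSVKDecodeEncode}, \texttt{HasPushoutSVKEncodeDecode}, and \texttt{HasGlueNaturalLoopRwEq} at $c_0 := \mathrm{base}$ for this concrete pushout. The encode map admits a direct construction via Definition~\ref{def:pushout-rec}: given a loop $\ell$ in $\RP^2$, run the recursion with $\mathrm{onInl} := \mathrm{id}_{\Sone}$, $\mathrm{onInr}$ constant at the center of $D^2$, and $\mathrm{onGlue}$ a chosen path from $f(c)$ to that center; the resulting loop in $\Sone$ gives an integer, which is packaged as a one-letter left word and projected into \texttt{AmalgamatedFreeProduct}. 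Both round-trip properties then reduce to a calculation modulo the single relation $2 \sim 0$.

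The main obstacle will be verifying \texttt{HasGlueNaturalLoopRwEq} for $C = \Sone$. Unlike the wedge-sum case, where $C = \mathbf{1}$ triggers the automatic \texttt{Subsingleton} instance of Section~\ref{sec:pushouts}, here $C$ has a non-trivial loop space and no automation applies: one must exhibit an explicit $\RwEq$-derivation showing $\mathrm{inl}_*(f_*(\mathrm{loop}^n)) \RwEq \glmark(\mathrm{base}) \cdot \mathrm{inr}_*(g_*(\mathrm{loop}^n)) \cdot \glmark(\mathrm{base})^{-1}$. I would build this by induction on $n$, using Corollary~\ref{cor:glue-natural-rearranged} as the base case and the groupoid laws of the $\mathrm{LND}_{\mathrm{EQ}}$-TRS to telescope the product, so that successive inner pairs $\glmark(\mathrm{base}) \cdot \glmark(\mathrm{base})^{-1}$ cancel step by step. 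Once this instance is in place, the remaining identification with the presentation $\langle a \mid a^2 = 1 \rangle$ is mechanical and parallel to the neighbouring entries of Table~\ref{tab:svk-summary}.
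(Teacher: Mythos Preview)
Your high-level plan---realize $\RP^2$ as the pushout $\Sone \leftarrow \Sone \xrightarrow{2} D^2$ and read off $\ZZ \AmalgProduct_{\ZZ} 1 \cong \ZZ/2\ZZ$---matches the paper's one-line proof sketch and the CW row in Table~\ref{tab:svk-summary}. The divergence is in how you propose to discharge \texttt{HasPushoutSVKEncodeQuot}, and there the argument breaks.

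Your encode map is built from a pushout recursor with $\mathrm{onInl} := \mathrm{id}_{\Sone}$ and $\mathrm{onInr}$ constant; this would yield a map $r : \RP^2 \to \Sone$ with $r \circ \inl = \mathrm{id}_{\Sone}$, i.e.\ a retraction of $\inl$. No such retraction can exist: it would exhibit $\piOne(\Sone) \cong \ZZ$ as a retract of $\piOne(\RP^2) \cong \ZZ_2$, which is absurd. Concretely, the data $\mathrm{onGlue} : \prod_{c:\Sone} \Path(f(c), \mathrm{base})$ that Definition~\ref{def:pushout-rec} demands is precisely a nullhomotopy of the degree-$2$ map $f$, and there is none. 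So the ``direct construction'' of encode fails before you reach the round-trip lemmas. This is exactly why the paper does \emph{not} attempt to construct encode: per Remark~\ref{rem:encode-decode-consistency} and the assumption manifest in Section~\ref{sec:assumption-manifest}, the encode map is supplied as an opt-in axiom instance (\texttt{instHasProjectivePiOneEncodeAxiom}), and in fact the formalization defines $\RP^2$ as a standalone HIT (four kernel axioms) rather than as a literal pushout.

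A secondary gap: your plan for \texttt{HasGlueNaturalLoopRwEq} inducts on $n$ for loops of the form $\mathrm{loop}^n$, but the typeclass quantifies over \emph{all} $p : \Omega(\Sone, \mathrm{base})$, not over $\piOne$-classes; reducing an arbitrary loop to some $\mathrm{loop}^n$ already presupposes the Circle encode machinery, which is itself axiomatized. And invoking Corollary~\ref{cor:glue-natural-rearranged} as the base case is circular, since that corollary is a reformulation of the very naturality statement you are trying to establish.
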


\begin{proof}
$\RP^2$ can be constructed by attaching a disk to a circle via the map $z \mapsto z^2$. The fundamental group has one generator (the loop) with the relation that going around twice is contractible.
\end{proof}

\subsection{Orientable Surfaces of Genus $g$}

\begin{theorem}[Fundamental Group of Orientable Surface]
\[
\piOne(\Sigma_g) \simeq \langle a_1, b_1, \ldots, a_g, b_g \mid [a_1, b_1] \cdots [a_g, b_g] = 1 \rangle
\]
where $[a, b] = aba^{-1}b^{-1}$ is the commutator.
\end{theorem}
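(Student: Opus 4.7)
The plan is to realize $\Sigma_g$ as the pushout obtained by attaching a single $2$-cell to a wedge of $2g$ circles along the standard relator, and then apply the Seifert-van Kampen schema (Theorem~\ref{thm:svk}) directly to this pushout. Concretely, I would set
\[
\Sigma_g := \Pushout\bigl(\textstyle\bigvee_{2g} \Sone,\; D^2,\; \Sone,\; \phi_g,\; \iota\bigr),
\]
where $\iota : \Sone \hookrightarrow D^2$ is the boundary inclusion and $\phi_g : \Sone \to \bigvee_{2g} \Sone$ is the attaching map whose underlying loop spells the word $\prod_{i=1}^{g}[a_i, b_i]$ in the free generators of the bouquet. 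This is the standard CW/polygonal model of the closed orientable surface of genus $g$, obtained from a $4g$-gon with boundary identified according to $a_1 b_1 a_1^{-1} b_1^{-1} \cdots a_g b_g a_g^{-1} b_g^{-1}$, and it fits directly into our framework because wedges and bouquets are already constructed as pushouts in Section~\ref{sec:pushouts}.

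Next I compute each vertex of the pushout square using results already in the paper: $\piOne(\bigvee_{2g}\Sone)\simeq F_{2g}$ by the bouquet-of-circles theorem (row~3 of Table~\ref{tab:svk-summary}); $\piOne(D^2)\simeq 1$ because the disk is contractible, so every loop rewrites to $\refl$; and $\piOne(\Sone)\simeq\ZZ$ by Theorem~\ref{thm:circle-pi1}. Invoking Theorem~\ref{thm:svk} then yields
\[
\piOne(\Sigma_g) \simeq F_{2g} \AmalgProduct_{\ZZ} 1,
\]
and unfolding the amalgamated free product with a trivial second factor identifies this with the quotient of $F_{2g}$ by the normal closure of $\phi_{g\ast}(1) = \prod_{i=1}^{g}[a_i, b_i]$, which is exactly the claimed presentation. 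On the Lean side this corresponds to a \texttt{FullAmalgamatedFreeProduct} whose right factor is \texttt{PUnit'} with the map $\ZZ \to 1$ the unique morphism.

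The main obstacle is not the algebra of the quotient---given the bouquet theorem and the formula for $\phi_g$, this is a short calculation---but the production of the typeclass instances required by Theorem~\ref{thm:svk} for this particular pushout. The \texttt{HasGlueNaturalLoopRwEq} instance at the basepoint is tractable because the intersection $\Sone$ has a single generating loop, so naturality against an arbitrary loop reduces by induction on winding number to naturality against $\mathrm{loop}$ and can be discharged by \texttt{path\_simp} once $\phi_g$ is committed to. The genuinely hard component is the encode map: one must exhibit a homomorphism $\piOne(\Sigma_g) \to F_{2g}\AmalgProduct_\ZZ 1$ inverting the decode of Section~\ref{sec:svk}. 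My strategy would be to mimic the bouquet construction---define encode by pushout recursion, sending each crossing of the attached $2$-cell to the relator word---and then verify \texttt{HasPushoutSVKDecodeEncode} and \texttt{HasPushoutSVKEncodeDecode} by induction on word length and on loop step structure respectively, using that the quotienting relation is generated precisely by the image of the attaching loop. Alternatively, one may bypass the explicit encode and supply only \texttt{HasPushoutSVKDecodeAmalgBijective} via classical choice, which already suffices for the equivalence as stated.
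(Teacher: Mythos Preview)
Your proposal is correct and follows exactly the approach the paper itself indicates: Table~\ref{tab:svk-summary} lists $\Sigma_g$ under ``CW-complex via SVK (2-cell attachment)'' with construction $\bigvee_{2g}\Sone \cup_{\prod[a_i,b_i]} D^2$, and the Appendix~\ref{app:inventory} entry points to \texttt{piOneEquivPresentation} with SVK typeclass dependencies. The paper gives no prose proof beyond that table entry and the Lean code stubs, so your elaboration of the amalgamated-product simplification $F_{2g}\AmalgProduct_{\ZZ}1 \simeq F_{2g}/\langle\!\langle\phi_{g*}(1)\rangle\!\rangle$ and of the typeclass obligations is in fact more detailed than anything appearing in the text.
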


In Lean:
\begin{lstlisting}
def OrientableSurfaceWord (g : Nat) : Type :=
  List (Fin' (2 * g) * Bool)  -- generator index and sign

inductive SurfaceRelation (g : Nat) :
    OrientableSurfaceWord g -> OrientableSurfaceWord g -> Prop
  | commutator_product :
      -- [a_1, b_1] ... [a_g, b_g] = 1
      SurfaceRelation g (commutatorProduct g) []
\end{lstlisting}

\subsection{Non-Orientable Surfaces}

\begin{theorem}[Fundamental Group of Non-Orientable Surface]
\[
\piOne(N_n) \simeq \langle a_1, \ldots, a_n \mid a_1^2 \cdots a_n^2 = 1 \rangle
\]
\end{theorem}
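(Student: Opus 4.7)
The plan follows the CW-complex strategy displayed in Table~\ref{tab:svk-summary}: realize $N_n$ as a 2-cell attached to the $n$-fold bouquet of circles along the word $w_n := a_1^2 a_2^2 \cdots a_n^2$, and then apply the SVK schema (Theorem~\ref{thm:svk}) together with the already-established computations $\piOne(\bigvee_n \Sone) \simeq F_n$ and $\piOne(D^2) \simeq 1$. Concretely, I would set $N_n := \Pushout(\bigvee_n \Sone, D^2, \Sone, w_n, \iota)$, where $\iota : \Sone \to D^2$ is the boundary inclusion and $w_n : \Sone \to \bigvee_n \Sone$ is defined by circle recursion, sending $\mathrm{base}$ to the wedge basepoint and $\mathrm{loop}$ to the concatenation $\ell_1 \cdot \ell_1 \cdot \ell_2 \cdot \ell_2 \cdots \ell_n \cdot \ell_n$ of loop generators. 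Functoriality of $\piOne$ applied to the recursor's computation rule then gives $(w_n)_*[\mathrm{loop}] = [a_1]^2 \cdots [a_n]^2 \in F_n$, while $\iota_* : \ZZ \to 1$ is trivial.

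Next, after discharging the typeclass instances \texttt{HasGlueNaturalLoopRwEq}, \texttt{HasPushoutSVKEncodeQuot}, and the two round-trip classes for this specific pushout, Theorem~\ref{thm:svk} yields
\[
\piOne(N_n) \;\simeq\; F_n \AmalgProduct_{\ZZ} 1.
\]
I would then identify this amalgamated free product with the presented group by constructing mutual homomorphisms: the forward map sends each generator $a_i$ of the presentation to $\inl[\ell_i]$, and its well-definedness follows because the defining relation $a_1^2 \cdots a_n^2 = 1$ is precisely the amalgamation identification applied to the generator of $\ZZ \simeq \piOne(\Sone)$. The reverse map uses the universal property of $F_n \AmalgProduct_\ZZ 1$ with the trivial right factor, which collapses the image of $\ZZ$ to the identity. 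Working inside \texttt{FullAmalgamatedFreeProduct} makes this cleaner, since the additional free-group reductions supply canonical reduced forms on both sides.

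The main obstacle is the joint discharge of the SVK typeclass instances for this particular pushout and the algebraic identification of $F_n \AmalgProduct_\ZZ 1$ with the presentation. The encode/decode round-trip classes are not free: one must exhibit them explicitly, which in practice means adapting the instances already built for $\Sigma_g$ and replacing the commutator-product word with the square-product word $w_n$---a change in the combinatorics of the attaching loop but not in the SVK strategy. On the algebraic side, one needs the reusable lemma that amalgamation against a trivial factor collapses to the normal-closure quotient, so that $F_n \AmalgProduct_\ZZ 1$ is definitionally equivalent to $F_n$ modulo the normal closure of $w_n$; isolating this lemma in the free-products library keeps the non-orientable surface proof a short instantiation of the same template already used for the Klein bottle and the orientable surfaces $\Sigma_g$.
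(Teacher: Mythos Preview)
Your proposal is correct and matches the route the paper indicates in Table~\ref{tab:svk-summary}: realize $N_n$ as the 2-cell attachment $\bigvee_n \Sone \cup_{a_1^2\cdots a_n^2} D^2$, invoke the SVK schema (Theorem~\ref{thm:svk}) under the listed typeclass instances, and collapse $F_n \AmalgProduct_\ZZ 1$ to the one-relator presentation via the normal-closure lemma. The paper gives no proof text for this theorem beyond that table entry and the Lean pointer \texttt{piOneEquivPresentation} (with SVK dependencies), so your sketch is already at least as detailed as what the paper itself supplies; the only cosmetic difference is that the formalization axiomatizes $N_n$ as a standalone HIT rather than literally defining it as the pushout, but the SVK argument is identical.
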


\begin{example}[Special Cases]
\begin{itemize}
    \item $N_1 = \RP^2$: $\piOne(N_1) = \langle a \mid a^2 = 1 \rangle \cong \ZZ_2$
    \item $N_2 = K$ (Klein bottle): $\piOne(N_2) = \langle a, b \mid a^2 b^2 = 1 \rangle \cong \ZZ \rtimes \ZZ$
\end{itemize}
\end{example}

\subsection{Lens Spaces}

\begin{theorem}[Fundamental Group of Lens Space]
For coprime $p, q$:
\[
\piOne(L(p, q)) \simeq \ZZ/p\ZZ
\]
\end{theorem}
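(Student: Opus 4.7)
The plan is to realize $L(p,q)$ via a genus-one Heegaard splitting: two solid tori $V_1, V_2$ glued along their common boundary torus $T \cong \Sone \times \Sone$ by a homeomorphism $\psi : \partial V_1 \to \partial V_2$ whose isotopy class is encoded by the pair $(p,q)$. This exhibits the lens space as the pushout
\[
L(p,q) \;\simeq\; \Pushout(V_1, V_2, T, \varphi_1, \varphi_2),
\]
with $\varphi_1, \varphi_2$ the boundary inclusions (one of them pre-composed by $\psi$). All three ingredients are path-connected, so Theorem~\ref{thm:svk} applies with basepoint a chosen $c_0 \in T$.

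First I would compute the fundamental groups of the pushout data. Each solid torus deformation retracts onto its core circle, so $\piOne(V_i) \simeq \ZZ$ with generators $\alpha, \beta$ respectively. The boundary torus satisfies $\piOne(T) \simeq \ZZ \times \ZZ$ by the product formula already used for the torus application. Let $m, \ell$ generate $\piOne(T)$ as meridian and longitude of $V_1$. The induced maps $i_1, i_2 : \piOne(T) \to \ZZ$ are then: $i_1(m) = 0, \; i_1(\ell) = \alpha$ (the meridian of $V_1$ bounds a disk in $V_1$); and, after expressing $m, \ell$ in the $(m_2, \ell_2)$ basis of $\partial V_2$ via the inverse of the genus-one gluing matrix (invertible over $\ZZ$ because $\gcd(p,q) = 1$ supplies the Bezout coefficients), one obtains $i_2(m) = p\beta$ and $i_2(\ell) = s\beta$ for an integer $s$ coming from Bezout.

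Next I invoke Theorem~\ref{thm:svk} to obtain
\[
\piOne(L(p,q)) \;\simeq\; \piOne(V_1) \AmalgProduct_{\piOne(T)} \piOne(V_2).
\]
The amalgamation relations $i_1(h) = i_2(h)$ for $h \in \{m, \ell\}$ become $p\beta = 0$ and $\alpha = s\beta$ in the free group $\langle \alpha, \beta\rangle$. Eliminating $\alpha$ via the second relation yields the presentation $\langle \beta \mid p\beta = 0\rangle \simeq \ZZ/p\ZZ$. The combinatorial reduction is a routine manipulation of the \texttt{FullAmalgamatedFreeProduct} quotient from Section~\ref{sec:freeproducts} and is packaged as an explicit group isomorphism with the cyclic group of order $p$.

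The hard part is not the group computation but discharging the schema's typeclass assumptions for this specific pushout. The cleanest route uses the choice-based interface \texttt{HasPushoutSVKDecodeAmalgBijective}, which only demands a Prop-level proof that $\decode_{\mathrm{Amalg}}$ is bijective. Bijectivity I would argue via the universal cover $S^3 \to L(p,q)$: each loop in $L(p,q)$ lifts to a path in $S^3$ (which has trivial $\piOne$ by the sphere theorem above), whose endpoints differ by a power $\zeta^k$ of the deck transformation, and $k \bmod p$ is precisely the winding class sent to $\beta^k$ in the presentation; this gives both injectivity (distinct $k$ yield distinct lifts) and surjectivity (every $k$ arises from a $\beta$-loop). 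The required \texttt{HasGlueNaturalLoopRwEq} instance at $c_0$ follows from the automatic Axiom-K derivation, since $\piOne(T) \simeq \ZZ \times \ZZ$ has decidable equality.
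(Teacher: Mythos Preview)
Your Heegaard-splitting approach and the reduction of $\ZZ \AmalgProduct_{\ZZ^2} \ZZ$ to $\langle \beta \mid p\beta = 0 \rangle \cong \ZZ/p\ZZ$ match the method the paper indicates (Table~\ref{tab:svk-summary} lists ``Heegaard (tori gluing)'' for lens spaces). The paper itself does not spell out a proof; in the formalization the encode data is supplied by the opt-in axiom \texttt{instHasLensPiOneEncodeAxiom} (Appendix~\ref{app:inventory}) rather than derived.

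There is, however, a concrete error in your discharge of the typeclass hypotheses. You write that \texttt{HasGlueNaturalLoopRwEq} ``follows from the automatic Axiom-K derivation, since $\piOne(T) \simeq \ZZ \times \ZZ$ has decidable equality.'' That is not what the instance requires: the Axiom-K auto-instance in Section~\ref{sec:pushouts} asks that the \emph{legs} $A$ and $B$ of the pushout---here the solid tori $V_1, V_2$---have decidable equality and satisfy Axiom K, i.e., that every self-path in $V_i$ is $\RwEq$-equal to $\refl$. A solid torus is homotopy equivalent to $\Sone$ and has $\piOne \cong \ZZ$, so Axiom K fails for it; and $C = T$ is certainly not a subsingleton, so the other auto-instance does not apply either. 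For this pushout, glue naturality must be supplied by hand or assumed. Your bijectivity sketch via the universal cover $S^3 \to L(p,q)$ is classically sound, but note that it already computes $\piOne(L(p,q))$ directly, making the SVK route verificatory rather than computational; the paper sidesteps this entirely by treating the encode data as an opt-in assumption.
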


\begin{remark}[Lens Space Construction]
Lens spaces $L(p,q)$ are quotients of $S^3$ by cyclic group actions. The fundamental group depends only on $p$, not $q$; however, the spaces $L(p,q)$ and $L(p,q')$ may be non-homeomorphic for different values of $q$.
\end{remark}

\subsection{Bouquet of $n$ Circles}

\begin{theorem}[Fundamental Group of $n$-Bouquet]
\[
\piOne(\bigvee_n \Sone) \simeq F_n
\]
where $F_n$ is the free group on $n$ generators.
\end{theorem}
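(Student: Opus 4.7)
The plan is to prove the result by induction on $n$, exploiting the fact that $\bigvee_{n+1}\Sone$ can be identified with $(\bigvee_n \Sone) \Wedge \Sone$ (both taken with the common basepoint). This reduces the calculation to successive applications of the wedge theorem from Section~\ref{sec:applications}, the known computation $\piOne(\Sone) \simeq \ZZ$ (Theorem~\ref{thm:circle-pi1}), and the algebraic fact that $F_n \FreeProduct \ZZ \simeq F_{n+1}$. The iterated approach is preferred over a single pushout with $C := \mathrm{Fin}_n$, since that $C$ fails to be path-connected for $n \geq 2$ and so does not fit the hypothesis of Theorem~\ref{thm:svk} directly.

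For the base cases, $n = 0$ gives the trivial bouquet (a point), so $\piOne \simeq 1 = F_0$; the case $n = 1$ reduces to $\Sone$ itself, giving $\piOne(\Sone) \simeq \ZZ = F_1$. For the inductive step, assume $\piOne(\bigvee_n \Sone) \simeq F_n$. Applying the wedge theorem to the decomposition $\bigvee_{n+1}\Sone \simeq (\bigvee_n \Sone) \Wedge \Sone$ yields
\[
\piOne(\bigvee_{n+1} \Sone) \simeq \piOne(\bigvee_n \Sone) \FreeProduct \piOne(\Sone) \simeq F_n \FreeProduct \ZZ \simeq F_{n+1},
\]
where the final isomorphism holds because $F_n \FreeProduct \ZZ$ satisfies the universal property of the free group on $n+1$ generators (combining the $n$ generators of $F_n$ with the single generator contributed by $\ZZ$).

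The main obstacle is discharging, at each inductive step, the typeclass assumptions required to invoke the wedge theorem. Concretely, we need instances of \texttt{HasWedgeSVKDecodeBijective} --- equivalently, instances of \texttt{HasPushoutSVKEncodeQuot}, \texttt{HasPushoutSVKDecodeEncode}, and \texttt{HasPushoutSVKEncodeDecode}, together with \texttt{HasGlueNaturalLoopRwEq} --- for the pair $(\bigvee_n \Sone, \Sone)$. The cleanest route is to construct an explicit encoder that reads a loop in $\bigvee_{n+1}\Sone$ and returns its alternating word decomposition by tracking which circle each sub-loop traverses; the round-trip properties then follow from glue-naturality and the inductive hypothesis. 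Alternatively, one can invoke the choice-based interface \texttt{HasPushoutSVKDecodeAmalgBijective} and verify injectivity and surjectivity of the decode map directly, which is less informative but avoids exhibiting an explicit inverse. Either way, the real work lies in cleanly decomposing an arbitrary loop in the $(n+1)$-bouquet into alternating contributions from the two factors --- this is the step where path-connectedness of the intermediate bouquets, propagated inductively, is genuinely used.
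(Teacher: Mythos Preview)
The paper states this theorem without proof: no argument, sketch, or Lean listing accompanies it, and the Application Inventory in Appendix~\ref{app:inventory} does not list a corresponding Lean theorem. So there is nothing in the paper to compare your argument against directly.

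That said, your proposal is mathematically sound and, in one respect, more careful than the paper itself. The paper's own definition of $\bigvee_n \Sone$ (the \texttt{BouquetN} listing in Section~\ref{sec:pushouts}) is a \emph{single} pushout with $C = \mathrm{Fin}'\,n$, and you are right that this $C$ is not path-connected for $n \geq 2$, so Theorem~\ref{thm:svk} does not apply as stated. Your inductive decomposition via iterated binary wedges sidesteps this, since at each step $C = \mathbf{1}$ is trivially path-connected and the \texttt{HasGlueNaturalLoopRwEq} instance is supplied automatically by the subsingleton instance the paper records.

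The one thing to flag: because the paper's \texttt{BouquetN} is \emph{defined} as the single pushout over $\mathrm{Fin}'\,n$, your argument as written computes $\piOne$ of a \emph{different} type (the iterated binary wedge). To prove the theorem about \texttt{BouquetN} itself you would additionally need an equivalence $\texttt{BouquetN}(n{+}1) \simeq \texttt{BouquetN}(n) \Wedge \Sone$ at the level of pointed types (or at least a $\piOne$-isomorphism), which you assert but do not construct. This is routine but not free: it requires a pushout-pasting or cofiber argument that the paper does not supply either. Everything else --- the base cases, the use of the wedge theorem, the algebraic step $F_n \ast \ZZ \simeq F_{n+1}$, and your honest accounting of the typeclass obligations --- is correct.
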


\section{Extended Features}
\label{sec:extended}

\subsection{Higher Homotopy Groups}

\begin{definition}[Higher Homotopy Groups]
The $n$-th homotopy group is defined via iterated loop spaces:
\[
\piN(A, a) := \piOne(\Omega^{n-1}(A, a), \refl^{n-1})
\]
where $\Omega^n$ denotes $n$-fold iteration of the loop space.
\end{definition}

\begin{theorem}[Eckmann-Hilton]
For $n \geq 2$, $\piN(A, a)$ is abelian.
\end{theorem}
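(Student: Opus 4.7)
The plan is to carry out the classical Eckmann--Hilton argument inside the computational paths framework, exploiting the fact that for $n \geq 2$ the iterated loop space $\Omega^n(A,a)$ carries two composition operations that share a unit and satisfy an interchange law. Concretely, write $\Omega^n(A,a) = \Omega(\Omega^{n-1}(A,a), \refl^{n-1})$, and observe that since $n - 1 \geq 1$, the base space $\Omega^{n-1}(A,a)$ is itself a loop space, hence elements of $\Omega^n$ are loops whose underlying points are themselves loops. This gives two operations on $\Omega^n$: the \emph{outer} composition $\alpha \cdot_h \beta$ coming from the outer $\Omega$, and the \emph{inner} composition $\alpha \cdot_v \beta$, defined by applying the inner loop composition pointwise along each $\alpha, \beta$ via congruence.

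First I would fix the two operations precisely in terms of the LND$_{\mathrm{EQ}}$-TRS primitives: $\cdot_h$ is just path transitivity on $\Omega^n$, while $\cdot_v$ is obtained by transporting the transitivity of $\Omega^{n-1}$ under the reflexivity interpretation, i.e.\ by applying $\mathrm{congrArg}$ of the inner composition map against $\alpha$ and $\beta$. Next I would check that both operations are unital at the same element $e := \refl(\refl^{n-1})$, using the left/right unit rules of the rewrite system both at the outer level and (via congruence) at the inner level. These unitality statements should hold at the level of $\RwEq$, which suffices after passing to the quotient $\piN = \Omega^n / {\RwEq}$.

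The technical heart is the interchange law
\[
(\alpha \cdot_h \beta) \cdot_v (\gamma \cdot_h \delta) \;\RwEq\; (\alpha \cdot_v \gamma) \cdot_h (\beta \cdot_v \delta),
\]
which in our setting is not a single rewrite but a derivation built from the congruence-closure rules (\texttt{symm\_congr}, \texttt{trans\_congr\_left}, \texttt{trans\_congr\_right}) together with associativity and unit laws. I would establish it by unfolding $\cdot_v$ as $\mathrm{congrArg}(\cdot_{\Omega^{n-1}})$ applied to the pair $(\alpha \cdot_h \beta, \gamma \cdot_h \delta)$, using the fact that $\mathrm{congrArg}$ distributes over outer transitivity --- this distribution is itself an instance of the congruence rules once one recognises $\cdot_{\Omega^{n-1}}$ as a binary operation for which $\mathrm{congrArg}$ on $\Omega \times \Omega$ satisfies the functoriality law $f_*(p \cdot q) \Rw f_*(p) \cdot f_*(q)$ present in the type-specific rule set.

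Given unitality and interchange at the $\RwEq$-level, the Eckmann--Hilton move is purely algebraic on the quotient $\piN(A,a)$: for equivalence classes $[\alpha], [\beta]$, apply interchange to $([\alpha] \cdot_h e) \cdot_v (e \cdot_h [\beta])$, use unitality on each factor to reduce the left side to $[\alpha] \cdot_v [\beta]$, then reorganise as $(e \cdot_h [\alpha]) \cdot_v ([\beta] \cdot_h e)$ and apply unitality again to obtain $[\beta] \cdot_h [\alpha] = [\alpha] \cdot_v [\beta]$. Symmetrically one deduces $[\alpha] \cdot_h [\beta] = [\alpha] \cdot_v [\beta]$, so $\cdot_h$ coincides with $\cdot_v$ and in particular is commutative. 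Since $\piN(A,a)$ is defined with the operation $\cdot_h$, this yields the desired abelianness. The main obstacle I anticipate is the explicit construction of the interchange derivation: unlike in HoTT where it follows by path induction, here one must produce an actual rewrite sequence using the context and congruence rules, and care is needed to ensure that the outer $\mathrm{congrArg}$ applied to inner composition truly commutes with outer transitivity on the nose (or up to a short $\RwEq$-derivation), rather than requiring additional higher-coherence axioms.
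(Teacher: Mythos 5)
Your proposal is the classical Eckmann--Hilton argument (two operations with a shared unit satisfying interchange must coincide and be commutative), which is exactly the route the paper's prose proof takes; in that sense you and the paper agree. The one point worth flagging is that the paper's actual mechanization, visible in the accompanying Lean snippet for \texttt{piTwo\_comm}, does not go through the interchange law at all: it closes the goal with \texttt{contractibility\_3}, i.e.\ it identifies \emph{any} two parallel 2-cells via a 3-cell in the derivation tower, so commutativity falls out without ever constructing the interchange derivation. This matters because the obstacle you correctly single out --- producing an explicit $\RwEq$-derivation of
\[
(\alpha \cdot_h \beta) \cdot_v (\gamma \cdot_h \delta) \;\RwEq\; (\alpha \cdot_v \gamma) \cdot_h (\beta \cdot_v \delta)
\]
from the congruence and functoriality rules of the $\mathrm{LND}_{\mathrm{EQ}}$-TRS, without smuggling in extra coherence axioms --- is real and is precisely what the formalization sidesteps. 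Note also that in the Lean development $\Omega^2$ is realized as \texttt{Derivation\_2 (Path.refl a) (Path.refl a)} (2-cells of the derivation tower) rather than as $\Path$-loops in the loop space, so your ``inner composition via $\mathrm{congrArg}$ of the composition map'' would need to be re-expressed as horizontal composition of derivations; the functoriality law $f_*(p \cdot q) \Rw f_*(p) \cdot f_*(q)$ you lean on is available for unary $f$, but for the binary composition map you must check it is derivable from the listed congruence-closure rules rather than assumed. Your plan is sound as an elaboration of the paper's sketch, but completing it would require more work than the paper itself does.
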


\begin{proof}
The Eckmann-Hilton argument shows that when two binary operations share a common unit and satisfy an interchange law, they must coincide and be commutative. In $\Omega^2(A, a)$, both horizontal and vertical composition of 2-loops satisfy these conditions.
\end{proof}

In Lean:
\begin{lstlisting}
def Loop2Space (A : Type u) (a : A) : Type u :=
  Derivation_2 (Path.refl a) (Path.refl a)

def PiTwo (A : Type u) (a : A) : Type u :=
  Quotient (Loop2Setoid A a)

theorem piTwo_comm (x y : PiTwo A a) : PiTwo.mul x y = PiTwo.mul y x := by
  induction x using Quotient.ind
  induction y using Quotient.ind
  apply Quotient.sound
  exact contractibility_3 _ _
\end{lstlisting}

\subsection{The Weak $\omega$-Groupoid Structure}

\begin{theorem}[Types are Weak $\omega$-Groupoids]
Via computational paths, every type carries the structure of a weak $\omega$-groupoid:
\begin{itemize}
    \item 0-cells: points of $A$
    \item 1-cells: paths $p : \Path(a, b)$
    \item 2-cells: derivations $\Der_2(p, q)$ witnessing $p \RwEq q$
    \item 3-cells: meta-derivations $\Der_3(\alpha, \beta)$
    \item $n$-cells: $\Der_n$ for all $n$
\end{itemize}
\end{theorem}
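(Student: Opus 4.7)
The plan is to build the weak $\omega$-groupoid structure inductively on dimension, populating each $\Der_n$ with the standard operations (composition, inverse, unit) directly from the inductive presentation of \texttt{RwEq} and its iterates, and then to obtain the infinite tower of coherence cells from the confluence/termination data of the $\mathrm{LND}_{\mathrm{EQ}}$-TRS via the Kraus--von Raumer principle cited in Section~\ref{sec:introduction}.

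First I would fix the cell skeleton: set $\Der_0(A) := A$, $\Der_1(a,b) := \Path(a,b)$, and let $\Der_2(p,q)$ be the type of $\RwEq$-derivations between $p$ and $q$. For $n \ge 3$, $\Der_n(\alpha, \beta)$ is the type of meta-derivations between $(n-1)$-cells, generated by the same inductive shape (refl, step, symm, trans, closed under congruence and context) but with Step-rules lifted to act on the data of $\Der_{n-1}$. Each $\Der_n$ then carries a composition (transitivity), an inverse (symmetry), and a unit (reflexivity) by construction, and these operations inherit their dependence on source/target from the indexed inductive definition.

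Next I would verify the groupoid laws at each dimension up to the next: the left and right unit laws, associativity, and the two inverse laws for $\Der_n$-composition are exhibited as specific $\Der_{n+1}$-cells, namely the images of the corresponding Step-rules (\texttt{trans\_refl\_left}, \texttt{trans\_refl\_right}, \texttt{trans\_assoc}, \texttt{trans\_symm}, \texttt{symm\_trans}) under the lifting. Functoriality of whiskering and naturality of the composition in congruences follow from the congruence and context closure constructors already present in the Step inductive, so no ad hoc data is required at this stage.

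The main obstacle is the infinite tower of higher coherences---the Mac Lane pentagon, the unit triangle, the Eckmann-Hilton interchange, and their iterates at every dimension---which cannot be enumerated by hand. Here I would invoke the Kraus--von Raumer technique: in a confluent, terminating rewrite system, any two parallel reductions of an $n$-cell to a common normal form can be joined by a canonical $(n+1)$-dimensional diamond, and every standard coherence arises as an instance of such a diamond (for example, the pentagon at level $n$ is obtained from the two normalization strategies for a quadruple composite and the canonical join at level $n+1$). The honest caveat, inherited from Section~\ref{sec:metatheory}, is that termination is not mechanized in this development and confluence is supplied only through the \texttt{HasJoinOfRw} typeclass; so the theorem is best read as conditional on those metatheoretic inputs, with the $0$--$2$-dimensional fragment fully mechanized and the $n \ge 3$ layers following uniformly by the same inductive recipe once the typeclass data is in place.
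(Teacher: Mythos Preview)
Your proposal is sound, and in fact it is considerably more detailed than what the paper itself offers. In the paper this theorem is presented essentially as a structural claim: the Lean definitions of \texttt{Derivation\_2} and \texttt{Derivation\_3} (with constructors \texttt{refl}, \texttt{step}, \texttt{inv}, \texttt{vcomp}) are displayed immediately after the statement, and the reader is referred to the formalization and to the prior work \cite{Veras2023WeakGroupoid} for the groupoid laws. No explicit argument for the higher coherences appears in the text; indeed the library-highlights section notes that the mechanized derivation tower is truncated at level~3.

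Your plan therefore differs from the paper mainly in that you actually sketch a \emph{proof}: you make explicit that the groupoid laws at level $n$ are witnessed by the lifted Step rules at level $n+1$, and you correctly identify Kraus--von~Raumer as the mechanism for the unbounded coherence tower. Your caveat about termination being unmechanized and confluence being typeclass-supplied is exactly right and matches the paper's own disclosure in \S\ref{sec:metatheory}. The one refinement worth making is to align your mechanization claim with the paper's: the formalization carries the tower through $\Der_3$ (not just $\Der_2$), so the ``fully mechanized fragment'' is dimensions $0$--$3$, with the uniform inductive recipe and the Kraus--von~Raumer argument taking over from there under the stated metatheoretic assumptions.
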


In Lean:
\begin{lstlisting}
-- 2-cells: derivations between paths
inductive Derivation_2 : Path a b -> Path a b -> Type u
  | refl (p : Path a b) : Derivation_2 p p
  | step (s : Step p q) : Derivation_2 p q
  | inv (d : Derivation_2 p q) : Derivation_2 q p
  | vcomp (d1 : Derivation_2 p q) (d2 : Derivation_2 q r) : Derivation_2 p r

-- 3-cells: derivations between 2-cells
inductive Derivation_3 : Derivation_2 p q -> Derivation_2 p q -> Type u
  | refl (d : Derivation_2 p q) : Derivation_3 d d
  | step (m : MetaStep_2 d1 d2) : Derivation_3 d1 d2
  | inv (m : Derivation_3 d1 d2) : Derivation_3 d2 d1
  | vcomp (m1 : Derivation_3 d1 d2) (m2 : Derivation_3 d2 d3) : Derivation_3 d1 d3
\end{lstlisting}

\subsection{Truncation Levels}

\begin{definition}[Truncation Levels]
Following HoTT, we define truncation levels:
\begin{itemize}
    \item \textbf{IsContr}($A$): $A$ is contractible (has a unique point)
    \item \textbf{IsProp}($A$): any two points are equal (``propositions'')
    \item \textbf{IsSet}($A$): any two parallel paths are equal (``sets'')
    \item \textbf{IsGroupoid}($A$): any two parallel 2-cells are equal
\end{itemize}
\end{definition}

In Lean:
\begin{lstlisting}
class IsContr (A : Type u) where
  center : A
  contr : forall a, Path center a

class IsProp (A : Type u) where
  allEq : forall (a b : A), Path a b

class IsSet (A : Type u) where
  pathEq : forall (a b : A) (p q : Path a b), RwEq p q

class IsGroupoid (A : Type u) where
  derivEq : forall (a b : A) (p q : Path a b) (d e : Derivation_2 p q),
    Derivation_3 d e
\end{lstlisting}

\subsection{Covering Space Theory}

\begin{definition}[Covering Space]
A \emph{covering space} of $B$ with fiber $F$ consists of:
\begin{itemize}
    \item A total space $E$
    \item A projection $p : E \to B$
    \item A $\piOne(B)$-action on $F$
    \item Fiber equivalence: $p^{-1}(b) \simeq F$ for each $b$
\end{itemize}
\end{definition}

In Lean:
\begin{lstlisting}
structure CoveringSpace (B : Type u) (F : Type u) (b0 : B) where
  totalSpace : Type u
  proj : totalSpace -> B
  fiber_equiv : forall b, Equiv (Fiber proj b) F
  deck_action : pi_1(B, b0) -> Equiv F F
  action_coherent : -- action is compatible with path lifting
\end{lstlisting}

\subsection{Fibration Theory}

\begin{definition}[Fibration]
A \emph{fibration} $p : E \to B$ with fiber $F$ admits a long exact sequence:
\[
\cdots \to \piTwo(F) \to \piTwo(E) \to \piTwo(B) \to \piOne(F) \to \piOne(E) \to \piOne(B) \to 1
\]
\end{definition}

\subsection{Eilenberg-MacLane Spaces}

\begin{theorem}[$\Sone$ as $K(\ZZ, 1)$]
The circle is the Eilenberg-MacLane space $K(\ZZ, 1)$:
\begin{itemize}
    \item $\piOne(\Sone) \cong \ZZ$
    \item $\piN(\Sone) \cong 1$ for $n \geq 2$
\end{itemize}
\end{theorem}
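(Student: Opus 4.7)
The plan splits along the two bullets. For $\piOne(\Sone) \simeq \ZZ$ I would simply invoke Theorem~\ref{thm:circle-pi1}, whose proof follows the Licata--Shulman encode--decode method \cite{LicataShulman2013}: one defines a code family $\mathrm{code} : \Sone \to \mathrm{Type}$ with $\mathrm{code}(\mathrm{base}) := \ZZ$ and transport along $\mathrm{loop}$ given by the successor equivalence, then constructs mutually $\RwEq$-inverse encode and decode maps. Nothing new is needed here; the first bullet is effectively a pointer to existing infrastructure in the library.

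For the second bullet, the strategy is to establish that the loop space $\Omega(\Sone, \mathrm{base})$ is an \texttt{IsSet}---any two parallel paths in it are $\RwEq$-related---and then to propagate this fact up the loop-space tower. The route to IsSet runs through the encode--decode correspondence: given two 1-paths $\omega_1, \omega_2 : \Path_{\Omega(\Sone)}(p, q)$ in the loop space, encode carries them to 1-paths in $\ZZ$; since $\ZZ$ has decidable equality, Hedberg/Axiom~K delivers \texttt{IsSet}($\ZZ$), so the encoded 1-paths are $\RwEq$-related. Transporting this derivation back along decode, using that $\decode \circ \encode$ agrees with the identity up to $\RwEq$ on representatives, produces the desired $\omega_1 \RwEq \omega_2$.

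Once $\Omega(\Sone, \mathrm{base})$ is known to be a set, I would close the argument with a general lemma: if a pointed type $(X, x)$ is a set, then every loop at $x$ is $\RwEq$ to $\refl_x$, so $\piOne(X, x)$ is trivial, and this property propagates up the loop-space tower, yielding $\piOne(\Omega^{k}(X, x), \refl^{k}) = 1$ for every $k \geq 0$. Applying this to $X := \Omega(\Sone, \mathrm{base})$ and $k := n - 2$ gives $\piN(\Sone, \mathrm{base}) = \piOne(\Omega^{n-1}(\Sone, \mathrm{base}), \refl^{n-1}) = 1$ for $n \geq 2$.

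The main obstacle is establishing \texttt{IsSet} for $\Omega(\Sone, \mathrm{base})$ in the explicit step-list presentation. Theorem~\ref{thm:circle-pi1} supplies only a bijection of $\RwEq$-classes, whereas IsSet demands explicit $\RwEq$-derivations at the level of representative 2-paths; extracting such witnesses requires a pointwise refinement of the Licata--Shulman round-trip, so that $\decode \circ \encode$ is witnessed by a concrete step sequence rather than a mere quotient-level equality. This is the step where the computational paths setting diverges most sharply from standard HoTT---propositional extensionality is not available to collapse the distinction---and it forces some explicit bookkeeping on the derivations. Once that pointwise witness is in hand, combined with Axiom~K for $\ZZ$, IsSet follows and the remaining induction on $n$ is routine.
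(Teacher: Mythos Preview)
The paper does not supply a proof for this theorem: it is stated in the ``Extended Features'' section (\S\ref{sec:extended}) as a library highlight, with a pointer to \texttt{EilenbergMacLane.lean}, and no proof environment follows. So there is nothing to compare your argument against at the level of the paper text.

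That said, your proposal is the standard and correct strategy. The first bullet is indeed just Theorem~\ref{thm:circle-pi1}. For the second, reducing to \texttt{IsSet}$(\Omega(\Sone,\mathrm{base}))$ and then propagating triviality up the loop-space tower is exactly the HoTT argument that $\Sone$ is a $1$-type. Your lemma ``if $X$ is a set then $\piOne(\Omega^k X)=1$ for all $k\geq 0$'' is correct and the induction is routine once \texttt{IsSet} is in hand.

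You have also put your finger on the genuine difficulty in this framework. The paper is explicit (Remark~\ref{rem:encode-decode-consistency}, \S\ref{sec:metatheory}, and the Appendix inventory) that the encode/decode data for $\piOne(\Sone)\simeq\ZZ$ are \emph{typeclass assumptions} or opt-in axiom instances, not internally derived---precisely because transport along \texttt{circleLoop} cannot be shown to act nontrivially in proof-irrelevant \texttt{Prop}. Your plan to upgrade the quotient-level bijection to a pointwise $\RwEq$ witness (so as to extract \texttt{IsSet} for the loop space) therefore runs into the same wall: you would need, at the representative level, that $\decode\circ\encode$ is $\RwEq$-related to the identity by an explicit derivation, and the library does not supply this---it supplies only the classification equivalence as an assumption. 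So while your outline is mathematically right, within this particular formalization the \texttt{IsSet} step almost certainly has to be packaged as a further typeclass assumption (or derived from one already present in \texttt{EilenbergMacLane.lean}), rather than proved outright from Theorem~\ref{thm:circle-pi1}. You flag this honestly; just be aware that ``some explicit bookkeeping'' may in practice mean ``an additional assumption,'' consistent with the paper's overall architecture.
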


\subsection{Path Simplification Tactics}

The formalization includes automated tactics for path reasoning:

\begin{lstlisting}
-- Basic simplification using groupoid laws
path_simp   -- refl . p ~ p, p . refl ~ p, etc.

-- Full automation (~25 simp lemmas)
path_auto   -- handles most standalone RwEq goals

-- Specialized tactics
path_rfl           -- close p ~ p
path_canon         -- normalize to canonical form
path_cancel_left   -- p^-1 . p ~ refl
path_cancel_right  -- p . p^-1 ~ refl
path_congr_left h  -- apply hypothesis on right
path_congr_right h -- apply hypothesis on left
\end{lstlisting}

\section{The Lean 4 Formalization}
\label{sec:lean}

\subsection{Why Axioms Are Necessary}

Lean 4 does not natively support higher-inductive types (HITs). Unlike Agda's \texttt{--cubical} mode or Coq's HoTT library, there is no built-in mechanism for defining types with both point and path constructors. We therefore axiomatize HITs, introducing axioms only where strictly necessary.

\begin{definition}[Kernel Axiom]
A \emph{kernel axiom} is a Lean \texttt{axiom} declaration that extends the trusted kernel. These are used sparingly for:
\begin{enumerate}
    \item \textbf{HIT type declarations}: The type itself (e.g., \texttt{Circle : Type u})
    \item \textbf{Point constructors}: Base points (e.g., \texttt{circleBase : Circle})
    \item \textbf{Path constructors}: Generating paths (e.g., \texttt{circleLoop : Path circleBase circleBase})
    \item \textbf{Higher path constructors}: 2-cells, surface relations
    \item \textbf{Recursion principles}: Eliminators with $\beta$-rules
\end{enumerate}
\end{definition}

\begin{definition}[Typeclass Assumption]
A \emph{typeclass assumption} is a hypothesis provided via Lean's typeclass mechanism. These are \textbf{not} kernel axioms---they are parameters that users instantiate. We use these for:
\begin{enumerate}
    \item \textbf{Computation rules}: \texttt{HasRecGlueRwEq} (how \texttt{rec} acts on \texttt{glue})
    \item \textbf{Naturality}: \texttt{HasGlueNaturalLoopRwEq} (glue commutes with paths)
    \item \textbf{Encode/decode data}: \texttt{HasPushoutSVKEncodeQuot}, etc.
\end{enumerate}
\end{definition}

\begin{remark}[Axioms vs Assumptions]
The distinction is crucial:
\begin{itemize}
    \item \textbf{Kernel axioms} (36 total) are permanently trusted by Lean. We minimize these.
    \item \textbf{Typeclass assumptions} are user-provided hypotheses. They appear in theorem statements as \texttt{[HasFoo]} parameters, making dependencies explicit.
\end{itemize}
For example, \texttt{seifertVanKampenEquiv} requires several typeclass assumptions but introduces no new kernel axioms---users must provide evidence for the assumptions.
\end{remark}

\begin{remark}[Consistency of the Encode/Decode Axioms]
\label{rem:encode-decode-consistency}
A natural question arises: why doesn't the fundamental group collapse? In Lean's proof-irrelevant \texttt{Prop}, all equality proofs between the same terms are equal. If $\piOne$ depended only on equality proofs, all loops would be identified.

The key to consistency is threefold:
\begin{enumerate}
    \item \textbf{Explicit step lists} (see Remark~\ref{rem:consistency}): Paths store explicit \texttt{steps : List (Step A)}, and $\RwEq$ identifies paths only when their step sequences can be transformed via the Step rules---not merely when their underlying equality proofs coincide.

    \item \textbf{No path collapse rule}: The Step relation contains no rule that identifies arbitrary paths with the same endpoints. In particular, there is no ``canonicalization'' rule $p \Rw \mathrm{ofEq}(p.\mathrm{proof})$ that would reduce any path to a canonical form based solely on its equality proof.

    \item \textbf{Axiomatized classification data}: The encode/decode equivalences (e.g., $\piOne(\Sone) \simeq \ZZ$) are \emph{assumed} via typeclass assumptions, not derived. We do not assert that transport along a self-equality computes nontrivially---such an assertion would be inconsistent with Lean's proof-irrelevant \texttt{Prop}. Instead, we assume the \emph{classification result}: there exists an equivalence between $\piOne$ and the expected group, with the round-trip properties holding.
\end{enumerate}
This design follows the HoTT tradition of axiomatizing what cannot be computed in the ambient type theory. The typeclass assumptions for encode/decode are analogous to Licata-Shulman's encode-decode method \cite{LicataShulman2013}, but formulated as external assumptions rather than internal proofs.
\end{remark}

\subsection{Justifying the 36 Kernel Axioms}

Each HIT requires axioms because Lean cannot construct them from primitive types:

\begin{center}
\begin{tabular}{lp{9cm}}
\toprule
\textbf{HIT} & \textbf{Why Axioms Are Needed} \\
\midrule
Circle & Needs a non-trivial loop $\mathrm{loop} : \Path(\mathrm{base}, \mathrm{base})$ that is not $\refl$. No Lean type has this property without axioms. \\
Sphere & Needs 2-dimensional structure (meridians collapsing at poles). \\
Klein bottle & Needs the non-orientable surface relation $aba^{-1}b = 1$. \\
$\RP^2$ & Needs $a^2 = 1$ as a path relation. \\
$\Sigma_g$, $N_n$ & Need genus-dependent relations on generators. \\
Cylinder & Needs both a loop and a segment with a 2-cell between them. \\
\bottomrule
\end{tabular}
\end{center}

\begin{remark}[Axiom-Free Constructions]
Some spaces require \textbf{no} additional kernel axioms beyond those already listed:
\begin{itemize}
    \item \textbf{Torus}: Defined as $T^2 := \Sone \times \Sone$ (uses Circle axioms only)
    \item \textbf{M\"obius band}: Homotopy equivalent to $\Sone$ (uses Circle axioms only)
    \item \textbf{Wedge sum}: Defined as $\Pushout(A, B, \mathbf{1})$ (zero kernel axioms---Pushout is Quot-based)
    \item \textbf{Suspension}: Defined as $\Pushout(\mathbf{1}, \mathbf{1}, A)$ (zero kernel axioms---Pushout is Quot-based)
    \item \textbf{Sphere}: Defined as $S^n := \Sigma^n(\mathbf{1})$ via iterated suspension (zero kernel axioms)
\end{itemize}
\end{remark}

\subsection{When to Use Each SVK Assumption}

The split typeclass architecture allows users to provide minimal hypotheses:

\begin{center}
\begin{tabular}{lp{8cm}}
\toprule
\textbf{Assumption} & \textbf{When to Use} \\
\midrule
\texttt{HasGlueNaturalLoopRwEq} & Always required for SVK. Automatically derived when $C$ is a subsingleton (wedge) or when $A$, $B$ satisfy Axiom K. \\
\texttt{HasPushoutSVKEncodeQuot} & When you have an explicit encode function. \\
\texttt{HasPushoutSVKDecodeEncode} & Standard encode-decode round-trip. \\
\texttt{HasPushoutSVKEncodeDecode} & Round-trip up to amalgamation equivalence. \\
\texttt{HasPushoutSVKDecodeAmalgBijective} & Choice-based interface when you only know decode is bijective (Prop-level), without explicit encode. \\
\bottomrule
\end{tabular}
\end{center}

\subsection{Architecture}

The Lean 4 formalization is organized as follows:

\begin{center}
\begin{tabular}{ll}
\toprule
\textbf{Module} & \textbf{Content} \\
\midrule
\texttt{Path/Basic/Core.lean} & Path type, refl, symm, trans \\
\texttt{Path/Basic/Congruence.lean} & congrArg, transport \\
\texttt{Path/Rewrite/Step.lean} & Single-step rewrites (50+ rules) \\
\texttt{Path/Rewrite/RwEq.lean} & Rewrite equality \\
\texttt{Path/Rewrite/PathTactic.lean} & Automated tactics \\
\texttt{Path/Homotopy/Loops.lean} & Loop spaces \\
\texttt{Path/Homotopy/FundamentalGroup.lean} & $\piOne$ definition \\
\texttt{Path/Homotopy/HigherHomotopy.lean} & $\piN$ for $n \geq 2$ \\
\texttt{Path/HIT/Circle.lean} & Circle HIT \\
\texttt{Path/HIT/Pushout.lean} & Pushout HIT \\
\texttt{Path/HIT/PushoutPaths.lean} & SVK theorem \\
\texttt{Path/HIT/FigureEight.lean} & $\Sone \Wedge \Sone$ \\
\texttt{Path/HIT/OrientableSurface.lean} & $\Sigma_g$ \\
\texttt{Path/HIT/NonOrientableSurface.lean} & $N_n$ \\
\texttt{Path/HIT/KleinBottle.lean} & Klein bottle \\
\texttt{Path/HIT/ProjectivePlane.lean} & $\RP^2$ \\
\texttt{Path/HIT/LensSpace.lean} & $L(p,q)$ \\
\texttt{Path/OmegaGroupoid.lean} & Weak $\omega$-groupoid structure \\
\bottomrule
\end{tabular}
\end{center}

\subsection{Kernel Axiom Inventory}

The formalization uses exactly \textbf{36 kernel axioms}, all for higher-inductive types:

\begin{table}[h]
\centering
\caption{Kernel axioms by HIT}
\label{tab:kernel-axioms}
\begin{tabular}{lll}
\toprule
\textbf{HIT} & \textbf{Axioms} & \textbf{Description} \\
\midrule
Circle & 7 & Type, base, loop, rec, rec\_base, rec\_loop, ind \\
Cylinder & 11 & Type, 2 bases, seg, 2 loops, surf, rec, 3 rec\_$\beta$ \\
KleinBottle & 5 & Type, base, loopA, loopB, surface \\
OrientableSurface & 5 & Type, base, loopA, loopB, surfaceRelation \\
NonOrientableSurface & 4 & Type, base, loop, 2-cell \\
ProjectivePlane & 4 & Type, base, loop, loopSquare \\
\midrule
\textbf{Total} & \textbf{36} & \\
\bottomrule
\end{tabular}
\end{table}

\textbf{Notable non-axioms} (constructions that add 0 kernel axioms):
\begin{itemize}
    \item \textbf{Pushout}: Defined as \texttt{Quot (PushoutRel A B C f g)} using Lean's built-in quotient type---\emph{no kernel axioms}. Constructors \texttt{inl}, \texttt{inr}, \texttt{glue} are definitions, not axioms.
    \item \textbf{Sphere}: Defined as $S^n := \Sigma^n(\mathbf{1})$ via suspension of the unit type---uses Pushout.
    \item \textbf{Torus}: Constructed as $\Sone \times \Sone$---uses only Circle axioms.
    \item \textbf{M\"obius band}: Homotopy equivalent to $\Sone$---uses only Circle axioms.
    \item \textbf{Wedge, Suspension}: Defined as special cases of Pushout.
    \item \textbf{SVK encode/decode}: Typeclass assumptions, not kernel axioms.
    \item \textbf{Lens spaces, Bouquet}: Constructed from existing HITs.
\end{itemize}

\begin{remark}[Pushout Design]
Unlike HITs such as Circle or Klein bottle (which require kernel axioms for their path constructors), the Pushout is implemented using Lean's built-in \texttt{Quot} type. The constructors \texttt{inl}, \texttt{inr} are \texttt{Quot.mk}, and \texttt{glue} is \texttt{Path.ofEq (Quot.sound ...)}.

The \emph{computation rules} (how \texttt{rec} acts on \texttt{glue}, glue naturality) are provided via typeclass assumptions (\texttt{HasRecGlueRwEq}, \texttt{HasGlueNaturalLoopRwEq}), not kernel axioms. This design means Pushout adds \textbf{zero} kernel axioms; all structure comes from user-instantiated typeclasses.
\end{remark}

\subsection{Opt-in Axiom Imports}

For convenience, opt-in imports provide assumption-free APIs:
\begin{lstlisting}
-- Import to get circlePiOneEquivInt' with no hypotheses
import ComputationalPaths.Path.HIT.CirclePiOneAxiom

-- Import to get torusPiOneEquivIntProd' with no hypotheses
import ComputationalPaths.Path.HIT.TorusPiOneAxiom

-- Import to get wedgeFundamentalGroupEquiv' with no hypotheses
import ComputationalPaths.Path.HIT.WedgeSVKAxiom
\end{lstlisting}

\subsection{Statistics}

\begin{center}
\begin{tabular}{lr}
\toprule
\textbf{Metric} & \textbf{Value} \\
\midrule
Total lines of Lean & 41,130 \\
Number of modules & 107 \\
Kernel axioms & 36 \\
Theorems/lemmas & 1,200+ \\
Definitions & 500+ \\
\bottomrule
\end{tabular}
\end{center}

\subsection{Assumption Manifest}
\label{sec:assumption-manifest}

For reviewer clarity, we provide a unified summary of all assumptions in the formalization.

\paragraph{Part A: Kernel Axioms (36 total, all for HIT type/constructor declarations)}

\begin{center}
\small
\begin{tabular}{lp{8cm}}
\toprule
\textbf{HIT} & \textbf{Kernel Axioms (Lean names)} \\
\midrule
Circle (7) & \texttt{Circle}, \texttt{circleBase}, \texttt{circleLoop}, \texttt{circleRec}, \texttt{circleRec\_base}, \texttt{circleRec\_loop}, \texttt{circleInd} \\
Cylinder (11) & \texttt{Cylinder}, \texttt{cylinderBase0/1}, \texttt{cylinderSeg}, \texttt{cylinderLoop0/1}, \texttt{cylinderSurf}, \texttt{cylinderRec}, \texttt{cylinderRec\_base0/1}, \texttt{cylinderRec\_loop0} \\
KleinBottle (5) & \texttt{KleinBottle}, \texttt{kleinBase}, \texttt{kleinLoopA/B}, \texttt{kleinSurf} \\
OrientableSurface (5) & \texttt{OrientableSurface}, \texttt{base}, \texttt{loopA}, \texttt{loopB}, \texttt{surfaceRelation} \\
NonOrientableSurface (4) & \texttt{NonOrientableSurface}, \texttt{nonOrientableBase}, \texttt{nonOrientableLoop}, \texttt{nonOrientableSurface2Cell} \\
ProjectivePlane (4) & \texttt{ProjectivePlane}, \texttt{projectiveBase}, \texttt{projectiveLoop}, \texttt{projectiveLoopSquare} \\
\bottomrule
\end{tabular}
\end{center}

\textbf{Not kernel axioms}: Pushout (Quot-based), Sphere (suspension), Torus ($\Sone \times \Sone$), Wedge, Suspension, Lens spaces, Bouquet.

\paragraph{Part B: Typeclass Assumptions for SVK (Theorem~\ref{thm:svk})}

\begin{center}
\small
\begin{tabular}{lll}
\toprule
\textbf{Typeclass} & \textbf{Role} & \textbf{Status} \\
\midrule
\texttt{HasGlueNaturalLoopRwEq} & Glue naturality & Required; auto-derived for subsingletons \\
\texttt{HasPushoutSVKEncodeQuot} & Encode map exists & Assumed (user-provided) \\
\texttt{HasPushoutSVKDecodeEncode} & Decode-encode round-trip & Assumed (user-provided) \\
\texttt{HasPushoutSVKEncodeDecode} & Encode-decode round-trip & Assumed (user-provided) \\
\texttt{HasPushoutSVKDecodeAmalgBijective} & Alternative: bijectivity only & Assumed (choice-based) \\
\bottomrule
\end{tabular}
\end{center}

\textbf{Proved unconditionally}: Decode map construction, decode respects amalgamation (given glue naturality).

\paragraph{Part C: Opt-in Axiom Instances (for assumption-free APIs)}

These are additional kernel axioms that globally instantiate typeclasses:
\begin{itemize}
    \item \texttt{instHasCirclePiOneEncodeAxiom} in \texttt{CirclePiOneAxiom.lean}
    \item \texttt{instHasKleinPiOneEncodeAxiom} in \texttt{KleinPiOneAxiom.lean}
    \item \texttt{instHasProjectivePiOneEncodeAxiom} in \texttt{ProjectivePiOneAxiom.lean}
    \item \texttt{instHasLensPiOneEncodeAxiom} in \texttt{LensPiOneAxiom.lean}
    \item \texttt{instHasWedgeSVKDecodeBijectiveAxiom} in \texttt{WedgeSVKAxiom.lean}
\end{itemize}
These are \emph{not} counted in the 36 core axioms; they are optional imports for convenience.

\section{Library Highlights}
\label{sec:library-highlights}

Beyond the SVK framework, the library includes extended features for advanced homotopy-theoretic constructions:

\subsection{Higher Homotopy Groups}

The higher homotopy groups $\piN(A, a)$ for $n \geq 2$ are defined via iterated loop spaces:
\[
\piN(A, a) := \piOne(\Omega^{n-1}(A, a), \refl)
\]
Key results include:
\begin{itemize}
    \item Abelianness of $\piTwo$ via Eckmann-Hilton (\texttt{HigherHomotopy.lean})
    \item Long exact sequence of a fibration (\texttt{Fibration.lean})
\end{itemize}

\subsection{Weak $\omega$-Groupoid Structure}

Types are shown to form weak $\omega$-groupoids via the derivation tower:
\begin{itemize}
    \item 1-cells: paths $\Path(a, b)$
    \item 2-cells: derivations $\Der_2(p, q)$ witnessing $p \RwEq q$
    \item 3-cells: derivations $\Der_3(d_1, d_2)$ between derivations
    \item And so on, with truncation at level 3 (\texttt{OmegaGroupoid.lean})
\end{itemize}

\subsection{Truncation Levels}

The library formalizes HoTT-style truncation levels:
\begin{itemize}
    \item \texttt{IsContr}: Contractible types (all points equal)
    \item \texttt{IsProp}: Propositions (all proofs equal)
    \item \texttt{IsSet}: Sets (all paths equal)
    \item \texttt{IsGroupoid}: Groupoids (all 2-paths equal)
\end{itemize}
Module: \texttt{Truncation.lean}

\subsection{Covering Spaces}

Covering space theory with $\piOne$-actions:
\begin{itemize}
    \item Fiber transport action (\texttt{CoveringSpace.lean})
    \item Galois correspondence framework (\texttt{CoveringClassification.lean})
\end{itemize}

\subsection{Additional Features}

\begin{itemize}
    \item \textbf{Eilenberg-MacLane spaces}: $\Sone = K(\ZZ, 1)$ characterization (\texttt{EilenbergMacLane.lean})
    \item \textbf{Mayer-Vietoris}: Abelianized SVK sequence (\texttt{MayerVietoris.lean})
    \item \textbf{Path tactics}: Automation via \texttt{path\_simp}, \texttt{path\_auto}, \texttt{path\_normalize} (\texttt{PathTactic.lean})
\end{itemize}

\section{Conclusion}
\label{sec:conclusion}

We have presented a \emph{modular SVK framework} within the computational paths setting, demonstrating that:

\begin{enumerate}
    \item Pushouts can be implemented via Lean's quotient types with \emph{zero kernel axioms}, while computation rules are supplied as typeclass assumptions.

    \item The encode-decode method works smoothly with \emph{split assumptions}, supporting both data-level and Prop-level (choice-based) interfaces. The SVK equivalence is parametric in user-supplied encode/decode structure.

    \item The framework yields instantiations for all closed surfaces, lens spaces, projective spaces, and more---with explicit dependency tracking via typeclasses.

    \item The entire development comprises 41,130 lines of Lean 4 code with only 36 kernel axioms (for HIT type/constructor declarations).
\end{enumerate}

\subsection{Comparison with HoTT Approaches}

\begin{table}[h]
\centering
\caption{Comparison with Standard HoTT}
\begin{tabular}{lll}
\toprule
\textbf{Aspect} & \textbf{Standard HoTT} & \textbf{Computational Paths} \\
\midrule
Path equality & Identity type ($\Id$) & Syntactic ($\RwEq$)$^*$ \\
Coherence & Abstract existence & Explicit rewrite derivations \\
SVK assumptions & Monolithic & Split/modular typeclasses \\
Encode function & Code family via $J$ & Typeclass assumption \\
$\omega$-groupoid & Higher identity types & Derivation tower \\
\bottomrule
\end{tabular}

\smallskip
\noindent $^*$Decidable under termination + confluence; termination not mechanized here (see \S\ref{sec:metatheory}).
\end{table}

\subsection{Limitations}

\begin{enumerate}
    \item \textbf{Axiom usage}: The encode/decode equivalences are axiomatized via typeclasses rather than proved. In cubical Agda, where transport computes, these would be provable. This is an inherent limitation of standard Lean's proof-irrelevant \texttt{Prop}---we cannot prove that transport along a non-trivial loop acts non-trivially. See Remark~\ref{rem:encode-decode-consistency} for why this design is nonetheless consistent.

    \item \textbf{Explicit step storage}: Paths store explicit \texttt{List (Step A)} to distinguish loops---a departure from the ``thin'' identity types of standard MLTT. This is essential for non-trivial $\piOne$ (see Remark~\ref{rem:consistency}).

    \item \textbf{Path-connectedness assumption}: The SVK theorem requires path-connected types.
\end{enumerate}

\subsection{Future Work}

\begin{enumerate}
    \item \textbf{Cubical type theory comparison}: Relate computational paths to cubical approaches.

    \item \textbf{Fully constructive encode}: Implement code families to eliminate encode axioms.

    \item \textbf{Covering space classification}: Prove the Galois correspondence between covering spaces and subgroups of $\piOne$.

    \item \textbf{Higher Hopf fibrations}: Quaternionic ($S^3 \to S^7 \to S^4$) and octonionic.

    \item \textbf{3-manifold classification}: Seifert fibered spaces, graph manifolds.

    \item \textbf{Homology theory}: Extend from $\piOne$ to full homology groups.
\end{enumerate}


\appendix
\section{Application Inventory}
\label{app:inventory}

The following table provides the exact Lean theorem names and module paths for the headline results. All theorems depend on typeclass assumptions as indicated; ``opt-in'' means an axiom import is available for assumption-free use.

\begin{center}
\footnotesize
\begin{tabular}{@{}llll@{}}
\toprule
\textbf{Result} & \textbf{Lean Name} & \textbf{Module} & \textbf{Deps} \\
\midrule
$\piOne(\Sone) \simeq \ZZ$ & \texttt{circlePiOneEquivInt} & \texttt{CircleStep} & Opt-in \\
$\piOne(T^2) \simeq \ZZ^2$ & \texttt{torusPiOneEquivIntProd} & \texttt{TorusStep} & Opt-in \\
$\piOne(S^2) \simeq 1$ & \texttt{sphere2PiOneEquivUnit} & \texttt{Sphere} & None \\
$\piOne(\Sone \Wedge \Sone)$ & \texttt{figureEightPiOneEquiv} & \texttt{FigureEight} & SVK \\
SVK schema & \texttt{seifertVanKampenEquiv} & \texttt{PushoutPaths} & SVK \\
SVK choice & \texttt{svkEquiv\_of\_bijective} & \texttt{PushoutPaths} & Bij. \\
$\piOne(K)$ & \texttt{kleinPiOneEquivSemidirect} & \texttt{KleinBottleStep} & Opt-in \\
$\piOne(\RP^2) \simeq \ZZ_2$ & \texttt{projectivePiOneEquivZ2} & \texttt{ProjectivePlane} & Opt-in \\
$\piOne(\Sigma_g)$ & \texttt{piOneEquivPresentation} & \texttt{OrientableSurface} & SVK \\
$\piOne(N_n)$ & \texttt{piOneEquivPresentation} & \texttt{NonOrientableSurface} & SVK \\
$\piOne(L(p,q))$ & \texttt{lensPiOneEquivZMod} & \texttt{LensSpace} & Opt-in \\
Wedge & \texttt{wedgeFundamentalGroupEquiv} & \texttt{WedgePaths} & Opt-in \\
\bottomrule
\end{tabular}
\end{center}

\noindent\textbf{Legend}: \emph{Opt-in}: axiom import provides global instance. \emph{SVK}: requires encode/decode typeclasses. \emph{Bij.}: requires \texttt{HasPushoutSVKDecodeAmalgBijective}. \emph{None}: no extra assumptions.


\begin{thebibliography}{99}

\bibitem{Cohen2018Cubical}
Cyril Cohen, Thierry Coquand, Simon Huber, and Anders M{\"o}rtberg.
\newblock Cubical type theory: A constructive interpretation of the univalence axiom.
\newblock {\em Mathematical Structures in Computer Science}, 28(7):1228--1269, 2018.

\bibitem{Queiroz2016Paths}
Ruy J. G.~B. de~Queiroz, Anjolina~G. de~Oliveira, and Arthur~F. Ramos.
\newblock Propositional equality, identity types, and direct computational paths.
\newblock {\em South American Journal of Logic}, 2(2):245--296, 2016.

\bibitem{Ruy4}
Ruy J. G.~B. de~Queiroz and Dov~M. Gabbay.
\newblock Equality in labelled deductive systems and the functional interpretation of propositional equality.
\newblock In {\em Proceedings of the 9th Amsterdam Colloquium}, pages 547--565, 1994.

\bibitem{Veras2018FundamentalGroups}
Tiago M.~L. de~Veras, Arthur~F. Ramos, Ruy J. G.~B. de~Queiroz, and Anjolina~G. de~Oliveira.
\newblock On the calculation of fundamental groups in homotopy type theory by means of computational paths.
\newblock {\em arXiv preprint arXiv:1804.01413}, 2018.

\bibitem{Veras2023Topological}
Tiago M.~L. de~Veras, Arthur~F. Ramos, Ruy J. G.~B. de~Queiroz, and Anjolina~G. de~Oliveira.
\newblock A topological application of labelled natural deduction.
\newblock {\em South American Journal of Logic}, 2023.

\bibitem{Veras2023WeakGroupoid}
Tiago M.~L. de~Veras, Arthur~F. Ramos, Ruy J. G.~B. de~Queiroz, and Anjolina~G. de~Oliveira.
\newblock Computational paths -- a weak groupoid.
\newblock {\em Journal of Logic and Computation}, 35(5):exad071, 2023.

\bibitem{Hatcher2002}
Allen Hatcher.
\newblock {\em Algebraic Topology}.
\newblock Cambridge University Press, 2002.

\bibitem{FavoniaShulman2016}
Kuen-Bang {Hou (Favonia)} and Michael Shulman.
\newblock A mechanization of the {Blakers-Massey} connectivity theorem in homotopy type theory.
\newblock In {\em LICS}, pages 565--574. ACM, 2016.

\bibitem{KrausVonRaumer2022}
Nicolai Kraus and Jakob von Raumer.
\newblock A rewriting coherence theorem with applications in homotopy type theory.
\newblock {\em Mathematical Structures in Computer Science}, 32(7):982--1014, 2022.

\bibitem{LicataShulman2013}
Daniel~R. Licata and Michael Shulman.
\newblock Calculating the fundamental group of the circle in homotopy type theory.
\newblock In {\em LICS}, pages 223--232. IEEE, 2013.

\bibitem{May1999}
J.~Peter May.
\newblock {\em A Concise Course in Algebraic Topology}.
\newblock University of Chicago Press, 1999.

\bibitem{Ramos2017IdentityPaths}
Arthur~F. Ramos, Ruy J. G.~B. de~Queiroz, and Anjolina~G. de~Oliveira.
\newblock On the identity type as the type of computational paths.
\newblock {\em Logic Journal of the IGPL}, 25(4):562--584, 2017.

\bibitem{Ramos2018ExplicitPaths}
Arthur~F. Ramos, Ruy J. G.~B. de~Queiroz, Anjolina~G. de~Oliveira, and Tiago M.~L. de~Veras.
\newblock Explicit computational paths.
\newblock {\em South American Journal of Logic}, 4(2):441--484, 2018.

\bibitem{Ramos2018Thesis}
Arthur~F. Ramos.
\newblock {\em Explicit Computational Paths in Type Theory}.
\newblock PhD thesis, Universidade Federal de Pernambuco, 2018.

\bibitem{ComputationalPathsLean}
Arthur~F. Ramos, Tiago M.~L. de~Veras, Ruy J. G.~B. de~Queiroz, and Anjolina~G. de~Oliveira.
\newblock Computational paths in {Lean}.
\newblock \url{https://github.com/Arthur742Ramos/ComputationalPathsLean}, 2025.

\bibitem{Seifert1931}
Herbert Seifert.
\newblock {Konstruktion dreidimensionaler geschlossener R{\"a}ume}.
\newblock {\em Berichte S{\"a}chs. Akad. Wiss. Leipzig}, 83:26--66, 1931.

\bibitem{HoTTBook2013}
{Univalent Foundations Program}.
\newblock {\em Homotopy Type Theory: Univalent Foundations of Mathematics}.
\newblock Institute for Advanced Study, 2013.

\bibitem{vanKampen1933}
Egbert~R. van Kampen.
\newblock {On the connection between the fundamental groups of some related spaces}.
\newblock {\em American Journal of Mathematics}, 55(1):261--267, 1933.

\end{thebibliography}
\end{document}